\newcommand{\algrule}[1][.2pt]{\par\vskip.5\baselineskip\hrule height #1\par\vskip.5\baselineskip}
 \def\@textbottom{\vskip \z@ \@plus 1pt}
 \let\@texttop\relax
\newcommand{\eat}[1]{}
\newcommand{\systemx}{\textrm{DeepDive}\xspace}
\newcommand{\D}{\mathbb{D}}
\DeclareMathOperator\ts{:-}
\newcites{app,cover}{References,References}
\newtheorem{theorem}{Theorem}[section]
\newtheorem{proposition}[theorem]{Proposition}
\newtheorem{example}[theorem]{Example}
\newtheorem{definition}[theorem]{Definition}
\def\compactify{\itemsep=0pt \topsep=0pt \partopsep=0pt \parsep=0pt}
\title{Incremental Knowledge Base Construction Using DeepDive}
\author{
Jaeho Shin$^\dagger$~~~~Sen Wu$^\dagger$~~~~Feiran Wang$^\dagger$~~~~Christopher De Sa$^\dagger$~~~~Ce Zhang$^\dagger$$^\ddagger$~~~~Christopher R\'{e}$^\dagger$\\
$^\dagger$Stanford University\\
$^\ddagger$University of Wisconsin-Madison\\
\{jaeho, senwu, feiran, cdesa, czhang, chrismre\}@cs.stanford.edu}
\begin{document}
\maketitle

\begin{abstract}
Populating a database with unstructured information is a long-standing
problem in industry and research that encompasses problems of
extraction, cleaning, and integration. Recent names used for this
problem include dealing with dark data and knowledge base construction
(KBC). In this work, we describe \systemx, a system that combines
database and machine learning ideas to help develop KBC systems, and
we present techniques to make the KBC process more efficient. We
observe that the KBC process is iterative, and we develop techniques
to incrementally produce inference results for KBC systems. We propose
two methods for incremental inference, based respectively on sampling
and variational techniques. We also study the tradeoff space of these
methods and develop a simple rule-based optimizer. \systemx includes
all of these contributions, and we evaluate \systemx on five KBC
systems, showing that it can speed up KBC inference tasks by up to two
orders of magnitude with negligible impact on quality.
\end{abstract}

\section{Introduction}\label{sec:intro}
The process of populating a structured relational database from
unstructured sources has received renewed interest in the database
community through high-profile start-up companies (e.g., Tamr and
Trifacta), established companies like IBM's
Watson~\cite{Ferrucci:2010:AI,Brown:2013:IBM}, and a variety of
research
efforts~\cite{Chen:2014:SIGMOD,Nakashole:2011:WSDM,Weikum:2010:PODS,Li:2011:ACL,Shen:2007:VLDB}. At
the same time, communities such as natural language processing and
machine learning are attacking similar problems under the name {\em
  knowledge base construction}
(KBC)~\cite{Jiang:2012:ICDM,Dong:2014:VLDB,Betteridge:2009:AAAI}. While
different communities place differing emphasis on the extraction,
cleaning, and integration phases, all communities seem to be converging
toward a common set of techniques that include a mix of data
processing, machine learning, and engineers-in-the-loop.

The ultimate goal of KBC is to obtain high-quality structured data
from unstructured information. These databases are richly structured
with tens of different entity types in complex
relationships. Typically, quality is assessed using two complementary
measures: precision (how often a claimed tuple is correct) and recall
(of the possible tuples to extract, how many are actually
extracted). These systems can ingest massive numbers of documents--far
outstripping the document counts of even well-funded human
curation efforts. Industrially, KBC systems are constructed by skilled
engineers in a months-long (or longer) process--not a one-shot
algorithmic task. Arguably, the most important question in such
systems is how to best use skilled engineers' time to rapidly improve
data quality. In its full generality, this question spans a number of
areas in computer science, including programming languages, systems,
and HCI. We focus on a narrower question, with the axiom that {\it the
  more rapid the programmer moves through the KBC construction loop,
  the more quickly she obtains high-quality data.}

This paper presents DeepDive, our open-source engine for knowledge
base construction.\footnote{\url{http://deepdive.stanford.edu}}
DeepDive's language and execution model are similar to other KBC
systems: DeepDive uses a high-level declarative
language~\cite{Niu:2011:VLDB,Chen:2014:SIGMOD,Nakashole:2011:WSDM}.
From a database perspective, DeepDive's language is based on SQL. From
a machine learning perspective, DeepDive's language is based on Markov
Logic~\cite{Niu:2011:VLDB,Domingos:2009:Book}: DeepDive's language
inherits Markov Logic Networks' (MLN's) formal
semantics.\footnote{DeepDive has some technical differences from
  Markov Logic that we have found useful in building applications. We
  discuss these differences in Section~\ref{sec:design}.}  Moreover,
it uses a standard execution model for such
systems~\cite{Niu:2011:VLDB,Chen:2014:SIGMOD,Nakashole:2011:WSDM} in
which programs go through two main phases: {\em grounding}, in which
one evaluates a sequence of SQL queries to produce a data structure
called a {\em factor graph} that describes a set of random variables
and how they are correlated. Essentially, every tuple in the database
or result of a query is a random variable (node) in this factor
graph. The {\em inference} phase takes the factor graph from grounding
and performs statistical inference using standard techniques, e.g.,
Gibbs sampling~\cite{Zhang:2013:SIGMOD,Wick:2010:PVLDB}. The output of
inference is the marginal probability of every tuple in the
database. As with Google's Knowledge Vault~\cite{Dong:2014:VLDB} and
others~\cite{Niu:2012:IJSWIS}, DeepDive also produces marginal
probabilities that are {\em calibrated}: if one examined all facts
with probability $0.9$, we would expect that approximately $90\%$ of
these facts would be correct. To calibrate these probabilities,
DeepDive estimates (i.e., learns) parameters of the statistical model
from data. Inference is a subroutine of the learning procedure and is
the critical loop. Inference and learning are computationally intense
(hours on 1TB RAM/48-core machines).

In our experience with DeepDive, we found that KBC is an iterative
process. In the past few years, DeepDive has been used to build dozens
of high-quality KBC systems by a handful of technology companies, a
number law enforcement agencies via DARPA's MEMEX, and scientists in
fields such as paleobiology, drug repurposing, and genomics. Recently,
we compared a DeepDive system's extractions to the quality of
extractions provided by human volunteers over the last ten years for a
paleobiology database, and we found that the DeepDive system had
higher quality (both precision and recall) on many entities and
relationships. Moreover, on all of the extracted entities and
relationships, DeepDive had no worse
quality~\cite{Peters:2014:ArXiv}. Additionally, the winning entry of
the 2014 TAC-KBC competition was built on
DeepDive~\cite{Angeli:2014:TAC}. In all cases, we have seen the
process of developing KBC systems is iterative: quality requirements
change, new data sources arrive, and new concepts are needed in the
application. This led us to develop techniques to make the entire
pipeline incremental in the face of changes both to the data and to
the DeepDive program. Our primary technical contributions are to make
the grounding and inference phases more incremental.\footnote{ As
  incremental learning uses standard techniques, we cover it only in
  the full version of this paper. }

\paragraph*{Incremental Grounding}
Grounding and feature extraction are performed by a series of SQL
queries. To make this phase incremental, we adapt the algorithm of
Gupta, Mumick, and Subrahmanian~\cite{Gupta:1993:SIGMOD}.  In
particular, DeepDive allows one to specify ``delta rules'' that
describe how the output will change as a result of changes to the
input. Although straightforward, this optimization has not been
applied systematically in such systems and can yield up to 360$\times$
speedup in KBC systems.

\paragraph*{Incremental Inference} Due to our choice of incremental
grounding, the input to DeepDive's inference phase is a factor graph
along with a set of changed data and rules. The goal is to compute the
output probabilities computed by the system. Our approach is to frame
the incremental maintenance problem as one of approximate
inference. Previous work in the database community has looked at how
machine learning data products change in response to both to new
labels~\cite{Koc:2011:PVLDB} and to new
data~\cite{Chen:2008:ICDE,Chen:2012:ICDE}. In KBC, both the program
and data change on each iteration. Our proposed approach can cope with
both types of change simultaneously.

The technical question is which approximate inference algorithms to
use in KBC applications. We choose to study two popular classes of
approximate inference techniques: \emph{sampling-based
  materialization} (inspired by sampling-based probabilistic databases
such as MCDB~\cite{Jampani:2008:SIGMOD}) and \emph{variational-based
  materialization} (inspired by techniques for approximating graphical
models~\cite{Wainwright:2006:TSP}). Applying these techniques to
incremental maintenance for KBC is novel, and it is not theoretically
clear how the techniques compare. Thus, we conducted an experimental
evaluation of these two approaches on a diverse set of DeepDive
programs.

We found these two approaches are sensitive to changes along
three largely orthogonal axes: the size of the factor graph, the
sparsity of correlations, and the anticipated number of future
changes. The performance varies by up to two orders of magnitude in
different points of the space. Our study of the tradeoff space
highlights that neither materialization strategy dominates the
other. To automatically choose the materialization strategy, we
develop a simple rule-based optimizer.

\paragraph*{Experimental Evaluation Highlights} We used DeepDive programs
developed by our group and \systemx users to understand whether the
improvements we describe can speed up the iterative development
process of DeepDive programs. To understand the extent to which
\systemx's techniques improve development time, we took a sequence of
six snapshots of a KBC system and ran them with our incremental
techniques and completely from scratch. In these snapshots, our
incremental techniques are 22$\times$ faster. The results for each
snapshot differ at most by 1\% for high-quality facts (90\%+
accuracy); fewer than 4\% of facts differ by more than $0.05$ in
probability between approaches. Thus, essentially the same facts were
given to the developer throughout execution using the two techniques,
but the incremental techniques delivered them more quickly.

\paragraph*{Outline} The rest of the paper is organized as follows.
Section~\ref{sec:lang_main} contains an in-depth analysis of the KBC
development process, and the presentation of our language for modeling
KBC systems. We discuss the different techniques for incremental
maintenance in Section~\ref{sec:tradeoff}. We also present the results
of the exploration of the tradeoff space and the description of our
optimizer. Our experimental evaluation is presented in
Section~\ref{sec:exp}.

\subsection*{Related Work}\label{sec:related_work}

\textbf{Knowledge Base Construction (KBC)} KBC has been an area of
intense study over the last decade, moving from pattern
matching~\cite{Hearst:1992:COLING} and rule-based
systems~\cite{Li:2011:ACL} to systems
that use machine learning for
KBC~\cite{Etzioni:2004:WWW,Betteridge:2009:AAAI,
Carlson:2010:AAAI,Nakashole:2011:WSDM,Dong:2014:VLDB}. Many groups have
studied how to improve the quality of specific components of KBC
systems~\cite{mintz2009distant,yao2010collective}.
We build on this line of work. We formalized the development process
and built \systemx to ease and accelerate the KBC process, which we
hope is of interest to many of these systems as well. DeepDive has
many common features to Chen and Wang~\cite{Chen:2014:SIGMOD},
Google's Knowledge Vault~\cite{Dong:2014:VLDB}, and a forerunner of
DeepDive, Tuffy~\cite{Niu:2011:VLDB}. We focus on the incremental
evaluation from feature extraction to inference.\\[-2pt]

\noindent
\textbf{Declarative Information Extraction} The database community has
proposed declarative languages for information extraction, a task with
similar goals to knowledge base construction, by extending relational
operations~\cite{Gottlob:2004:PODS,Shen:2007:VLDB,Li:2011:ACL},
or rule-based approaches~\cite{Nakashole:2011:WSDM}. These approaches
can take advantage of classic view maintenance techniques to make the
execution incremental, but they do not study how to incrementally
maintain the result of statistical inference and learning, which is
the focus of our work.\\[-2pt]

\noindent
\textbf{Incremental Maintenance of Statistical Inference and Learning}
Related work has focused on incremental inference for specific classes
of graphs (tree-structured~\cite{Delcher:1996:JAIR,Acar:NIPS:2007} or
low-degree~\cite{Acar:2008:UAI} graphical models). We deal instead
with the class of factor graphs that arise from the KBC process, which
is much more general than the ones examined in previous approaches.
Nath and Domingos~\cite{Nath:2010:AAAI} studied how to extend belief
propagation on factor graphs with new evidence, but without any
modification to the structure of the graph. Wick and
McCallum~\cite{Wick:2011:NIPS} proposed a ``query-aware MCMC''
method. They designed a proposal scheme so that query variables tend to
be sampled more frequently than other variables. We frame our problem
as approximate inference, which allows us to handle changes to the
program and the data in a single approach.\\[-2pt]

\section{KBC using DeepDive}\label{sec:lang_main}

We describe \systemx, an end-to-end framework for building KBC systems
with a declarative language. We first recall standard definitions, and
then introduce the essentials of the framework by example, compare our
framework with Markov Logic, and describe DeepDive's formal semantics.

\begin{figure}[t]
\centering
\includegraphics[width=\columnwidth]{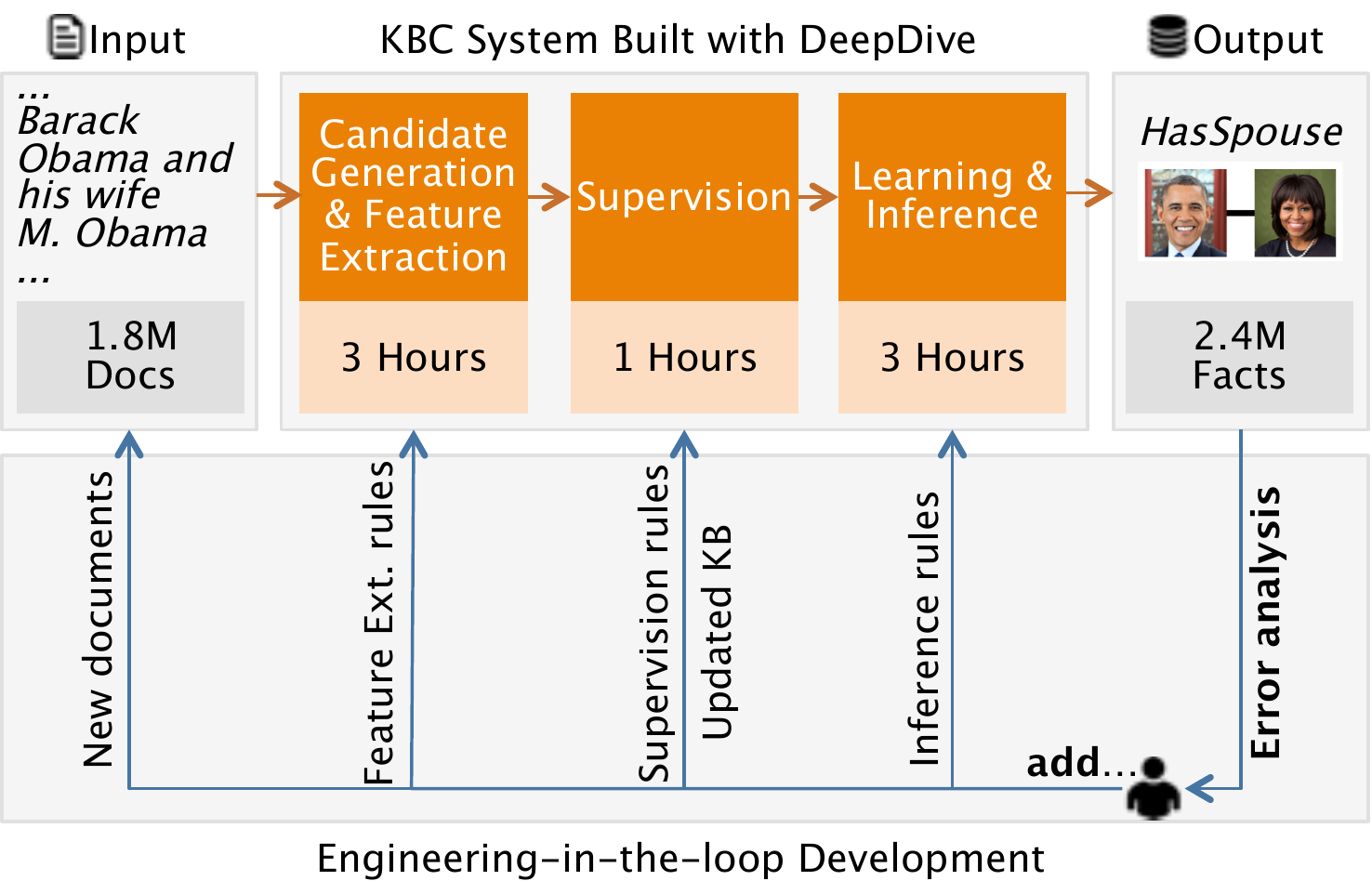}
\caption{A KBC system takes as input unstructured documents and
  outputs a structured knowledge base. The runtimes are for the
  TAC-KBP competition system (News). To improve quality, the developer
  adds new rules and new data.}
\label{fig:loop}
\end{figure}

\subsection{Definitions for KBC Systems}

The \emph{input} to a KBC system is a heterogeneous collection of
unstructured, semi-structured, and structured data, ranging from text
documents to existing but incomplete KBs. The \emph{output} of the
system is a relational database containing facts extracted from the
input and put into the appropriate schema. Creating the knowledge base
may involve extraction, cleaning, and integration.

\begin{example} Figure~\ref{fig:loop}
illustrates our running example: a knowledge base with pairs of
individuals that are married to each other. The input to the system is
a collection of news articles and an incomplete set of married
persons; the output is a KB containing pairs of person that are
married. A KBC system extracts linguistic patterns, e.g., ``...~and
his wife~...''  between a pair of mentions of individuals (e.g.,
``Barack Obama'' and ``M. Obama''). Roughly, these patterns are then
used as features in a classifier deciding whether this pair of mentions
indicates that they are married (in the \textsf{HasSpouse}) relation.
\label{example:kbc}
\end{example}

We adopt standard terminology from KBC, e.g.,
ACE.\footnote{\url{http://www.itl.nist.gov/iad/mig/tests/ace/2000/}}
There are four types of objects that a KBC system seeks to extract
from input documents, namely \emph{entities}, \emph{relations},
\emph{mentions}, and \emph{relation mentions}.  An {\bf entity} is a
real-world person, place, or thing. For example,
``Michelle\_Obama\_1'' represents the actual entity for a person whose
name is ``Michelle Obama''; another individual with the same name
would have another number. A {\bf relation} associates two (or more)
entities, and represents the fact that there exists a relationship
between the participating entities.  For example, ``Barack\_Obama\_1''
and ``Michelle\_Obama\_1'' participate in the \textsf{HasSpouse}
relation, which indicates that they are married. These real-world
entities and relationships are described in text; a {\bf mention} is a
span of text in an input document that refers to an entity or
relationship: ``Michelle'' may be a mention of the entity
``Michelle\_Obama\_1.'' A \emph{relation mention} is a phrase that
connects two mentions that participate in a relation such as
``(Barack Obama, M. Obama)". The process of mapping mentions to
entities is called \emph{entity linking}.

\subsection{The DeepDive Framework} \label{sec:dd}

\begin{figure*}[t]
\centering
\includegraphics[width=1.0\textwidth]{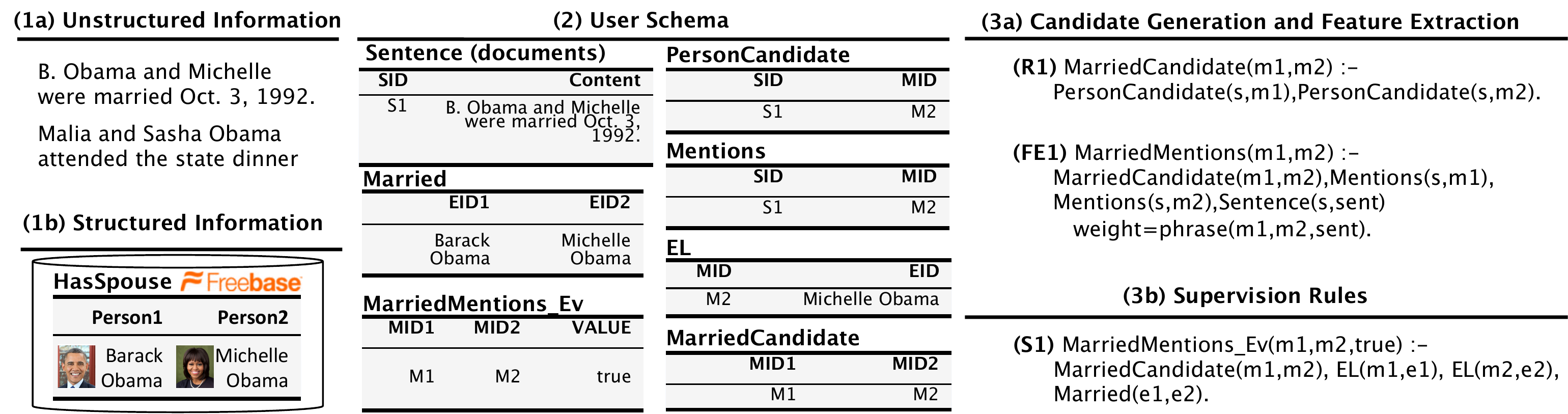}
\caption{An example KBC system. See Section~\ref{sec:dd} for details.}
\label{fig:language}
\end{figure*}

\systemx is an end-to-end framework for building KBC systems, as shown
in Figure~\ref{fig:loop}.\footnote{For more information, including
  examples, please see \url{http://deepdive.stanford.edu}. Note that
  our engine is built on Postgres and Greenplum for all SQL processing
  and UDFs. There is also a port to MySQL.} We walk through each
phase. DeepDive supports both SQL and datalog, but we use datalog syntax
for exposition.
The rules we describe in this section are manually created 
by the user of DeepDive and the process of creating these rules is 
application-specific.

\paragraph*{Candidate Generation and Feature Extraction}

All data in \systemx is stored in a relational database. The first
phase populates the database using a set of SQL queries and
user-defined functions (UDFs) that we call {\em feature
  extractors}. By default, \systemx stores all documents in the
database in one sentence per row with markup produced by standard NLP
pre-processing tools, including HTML stripping, part-of-speech
tagging, and linguistic parsing. After this loading step, \systemx
executes two types of queries: (1) {\it candidate mappings}, which are
SQL queries that produce possible mentions, entities, and relations,
and (2) {\it feature extractors} that associate features to
candidates, e.g., ``...~and his wife~...''  in
Example~\ref{example:kbc}.

\begin{example}
Candidate mappings are usually simple. Here, we create a relation
mention for every pair of candidate persons in the same sentence
($s$):
\begin{align*}
        \mbox{\texttt{(R1) }}&\texttt{MarriedCandidate}(m1,m2)\ts\\
	&\texttt{PersonCandidate}(s,m1),\texttt{PersonCandidate}(s,m2).
\end{align*}
\end{example}

Candidate mappings are simply SQL queries with UDFs that look like
low-precision but high-recall ETL scripts. 
Such rules must be high
recall: if the union of candidate mappings misses a fact, DeepDive has
no chance to extract it. 

We also need to extract features, and we extend classical Markov Logic
in two ways: (1) user-defined functions and (2) weight tying, which
we illustrate by example.

\begin{example}
Suppose that $\mathsf{phrase}(m1,m2,sent)$ returns the phrase between
two mentions in the sentence, e.g., ``and his wife'' in the above
example. The phrase between two mentions may indicate whether two
people are married. We would write this as:
\begin{align*}
  \mbox{\texttt{(FE1) }}&\texttt{MarriedMentions}(m1,m2)\ts\\
&\texttt{MarriedCandidate}(m1,m2),\texttt{Mention}(s,m1),\\
  &\texttt{Mention}(s,m2),
  \texttt{Sentence}(s,sent) \\
  &weight = \mathsf{phrase}(m1,m2,sent).
\end{align*}
One can think about this like a classifier: This rule says that
whether the text indicates that the mentions $m1$ and $m2$ are married
is {\em influenced} by the phrase between those mention pairs. The
system will infer based on training data its confidence (by estimating
the weight) that two mentions are indeed indicated to be married.
\end{example}

Technically, $\mathsf{phrase}$ returns an identifier that determines
which weights should be used for a given relation mention in a
sentence.  If $\mathsf{phrase}$ returns the same result for two
relation mentions, they receive the {\em same} weight. We explain
weight tying in more detail in Section \ref{sec:design}.  In general,
$\mathsf{phrase}$ could be an arbitrary UDF that operates in a
per-tuple fashion. This allows DeepDive to support common examples of
features such as ``bag-of-words'' to context-aware NLP features to
highly domain-specific dictionaries and ontologies. In addition to
specifying sets of classifiers, DeepDive inherits Markov Logic's
ability to specify rich correlations between entities via weighted
rules. Such rules are particularly helpful for data cleaning and data
integration.

\paragraph*{Supervision}
Just as in Markov Logic, \systemx can use training data or evidence
about any relation; in particular, each user relation is associated
with an evidence relation with the same schema and an additional field
that indicates whether the entry is true or false. Continuing our
example, the evidence relation $MarriedMentions\_Ev$ could contain
mention pairs with positive and negative labels. Operationally, two
standard techniques generate training data: (1) hand-labeling, and (2)
{\em distant supervision}, which we illustrate below.

\begin{example}
\emph{Distant
  supervision}~\cite{mintz2009distant,DBLP:conf/acl/HoffmannZLZW11} is
a popular technique to create evidence in KBC systems. The idea is to
use an incomplete KB of married entity pairs to {\em heuristically}
label (as \texttt{True} evidence) all relation mentions that link to a
pair of married entities:
\begin{align*}
\mbox{\texttt{(S1) }}&\texttt{MarriedMentions\_Ev}(m1,m2,true) \ts\\ 
&\texttt{MarriedCandidates}(m1,m2),
  \texttt{EL}(m1,e1),\\ 
&\texttt{EL}(m2,e2), \texttt{Married}(e1,e2).
\end{align*}
Here, \texttt{Married} is an (incomplete) list of married real-world
persons that we wish to extend.  The relation EL is for ``entity
linking'' that maps mentions to their candidate entities. At first
blush, this rule seems incorrect. However, it generates noisy,
imperfect examples of sentences that indicate two people are
married. Machine learning techniques are able to exploit redundancy to
cope with the noise and learn the relevant phrases (e.g., ``and his
wife'' ). Negative examples are generated by relations that are
largely disjoint (e.g., siblings). Similar to
  DIPRE\cite{Brin:1998:WebDB} and Hearst
  patterns~\cite{Hearst:1992:COLING}, distant supervision exploits the
  ``duality''~\cite{Brin:1998:WebDB} between patterns and relation
  instances; furthermore, it allows us to integrate this idea into
  DeepDive's unified probabilistic framework.  
\end{example}

\paragraph*{Learning and Inference}
In the learning and inference phase, \systemx generates a factor
graph, similar to Markov Logic, and uses techniques from
Tuffy~\cite{Niu:2011:VLDB}. The inference and learning are done using
standard techniques (Gibbs Sampling) that we describe below after
introducing the formal semantics.

\paragraph*{Error Analysis}
\systemx runs the above three phases in sequence, and at the end of
the learning and inference, it obtains a marginal probability $p$ for
each candidate fact.  To produce the final KB, the user often selects
facts in which we are highly confident, e.g., $p>0.95$. Typically, the
user needs to inspect errors and repeat, a process that we call {\em
  error analysis}. Error analysis is the process of understanding the
most common mistakes (incorrect extractions, too-specific features,
candidate mistakes, etc.) and deciding how to correct
them~\cite{Re:2014:IEEE}. To facilitate error analysis, users 
write standard SQL queries.

\subsection{Discussion of Design Choices}
\label{sec:design}
We have found three related aspects of the DeepDive approach that we
believe enable non-computer scientists to write DeepDive programs: (1)
there is no reference in a DeepDive program to the underlying machine
learning algorithms. Thus, DeepDive programs are declarative in a
strong sense. Probabilistic semantics provide a way to debug the system
independently of any algorithm. (2) DeepDive allows users to write
feature extraction code in familiar languages (Python, SQL, and
Scala). (3) \systemx fits into the familiar SQL stack, which allows
standard tools to inspect and visualize the data. A second key
property is that the user constructs an end-to-end system and then
refines the quality of the system in a pay-as-you-go
way~\cite{Madhavan:2007:CIDR}. In contrast, traditional pipeline-based
ETL scripts may lead to time and effort spent on extraction and
integration--without the ability to evaluate how important each step
is for end-to-end application quality. Anecdotally, pay-as-you-go
leads to more informed decisions about how to improve quality.

\paragraph*{Comparison with Markov Logic}
Our language is based on Markov
Logic~\cite{Niu:2011:VLDB,Domingos:2009:Book}, and our current
language inherits Markov Logic's formal semantics. However, there
are three differences in how we implement DeepDive's language:\\[-2pt]

\noindent
{\bf Weight Tying.}  As shown in rule \textsf{FE1}, \systemx allows
factors to share weights across rules, which is used in every DeepDive
system. As we will see declaring a classifier is a one-liner in
DeepDive: $Class(x) \ts R(x,f)$ with $weight=w(f)$ declares a
classifier for objects (bindings of $x$); $R(x,f)$ indicates that
object $x$ has features $f$. In standard MLNs, this would require one
rule for each feature.\footnote{Our system Tuffy introduced this
  feature to MLNs, but its semantics had not been described in the
  literature.}  In MLNs, every rule introduces a single weight,
  and the correlation structure and weight structure are
  coupled. DeepDive decouples them, which makes writing some
  applications easier.   \\[-2pt]

\noindent
{\bf User-defined Functions.} As shown in rule \textsf{FE1}, \systemx
allows the user to use user-defined functions ($\mathsf{phrase}$ in \textsf{FE1}) to
specify feature extraction rules. This allows \systemx to handle
common feature extraction idioms using regular expressions, Python
scripts, etc. This brings more of the KBC pipeline into \systemx,
which allows DeepDive to find optimization opportunities for a larger
fraction of this pipeline.\\[-2pt]

\noindent
{\bf Implication Semantics.} In the next section, we introduce a
function $g$ that counts the number of groundings in different
ways. $g$ is an example of {\em transformation
  groups}~\cite[Ch.~12]{Jaynes:2003:Book}, a technique from the
Bayesian inference literature to model different noise
distributions. Experimentally, we show that different semantics
(choices of $g$) affect the quality of KBC applications (up to 10\% in
F1 score) compared with the default semantics of MLNs. After
some notation, we give an example to illustrate how $g$ alters
the semantics.

\subsection{Semantics of a DeepDive Program}\label{sec:execution}
\begin{figure}\centering
  \includegraphics[width=0.45\textwidth]{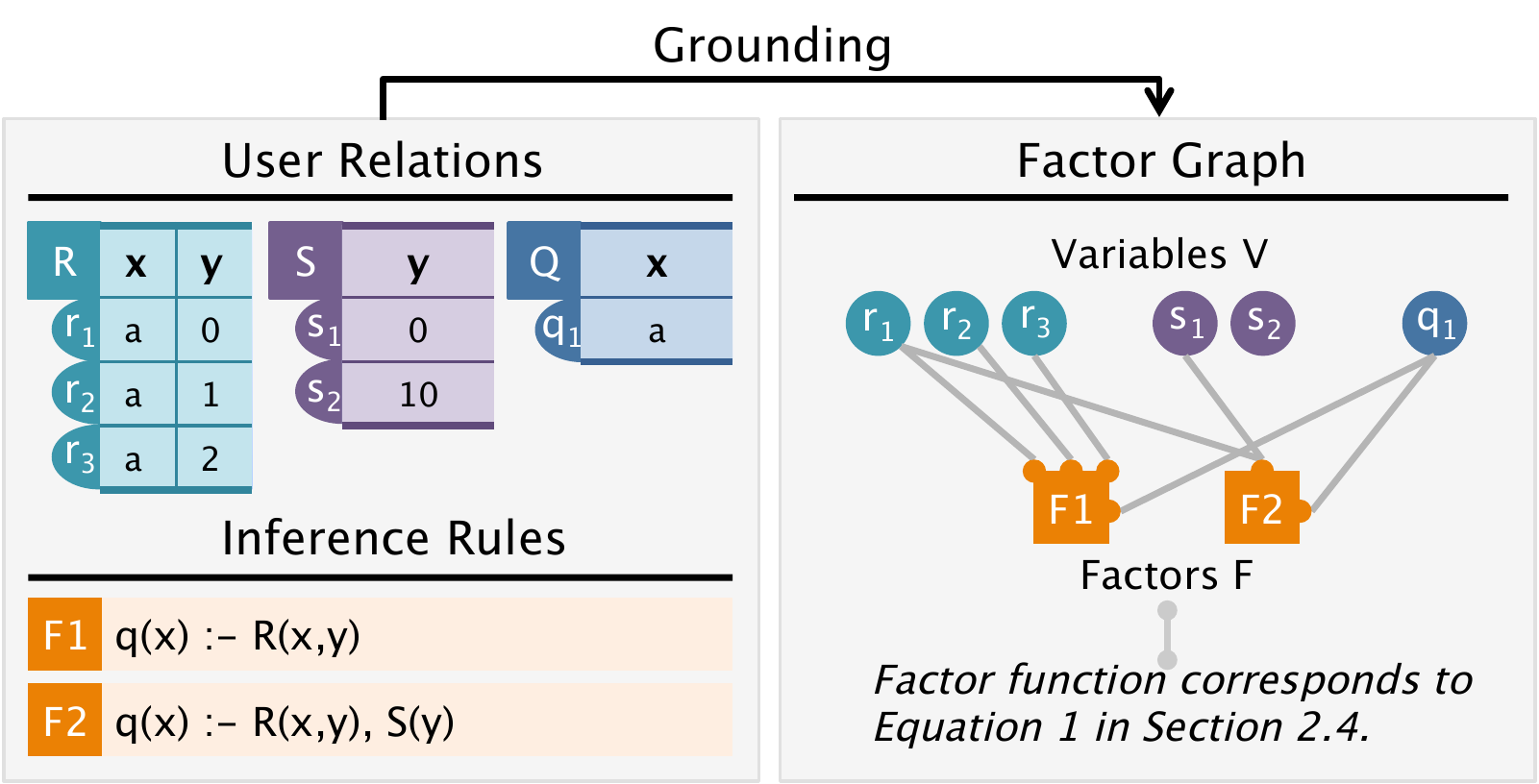}
\caption{Schematic illustration of grounding. Each tuple corresponds
  to a Boolean random variable and node in the factor graph. We create
  one factor for every set of
  groundings.}\label{fig:grounding}\end{figure}

A DeepDive program is a set of rules with weights. During inference,
the values of all weights $w$ are assumed to be known, while, in
learning, one finds the set of weights that maximizes the probability
of the evidence. As shown in Figure~\ref{fig:grounding}, a DeepDive
program defines a standard structure called a \emph{factor
  graph}~\cite{Wainwright:2008:Book}. First, we directly define the
probability distribution for rules that involve weights, as it may
help clarify our motivation. Then, we describe the corresponding
factor graph on which inference takes place.

\begin{figure}
  \begin{center}
    \begin{tabular}{cr}
\hline
Semantics & $g(n)$ \\
\hline
Linear    & $n$\\
Ratio     & $\log(1 + n)$\\
Logical   & $\mathbb{1}_{\{n > 0\}}$\\
\hline
\end{tabular}
\end{center}
\caption{Semantics for $g$ in Equation~\ref{eq:ruleweight}. }
\label{fig:semantics}
\end{figure}

Each possible tuple in the user schema--both IDB and EDB
predicates--defines a Boolean random variable (r.v.). Let $\mathbb{V}$
be the set of these r.v.'s. Some of the r.v.'s are fixed to a specific
value, e.g., as specified in a supervision rule or by training
data. Thus, $\mathbb{V}$ has two parts: a set $\mathcal{E}$ of
\emph{evidence} variables (those fixed to a specific values) and a set
$\mathcal{Q}$ of \emph{query} variables whose value the system will
infer. The class of evidence variables is further split into
\emph{positive evidence} and \emph{negative evidence}. We denote the
set of positive evidence variables as $\mathcal{P}$, and the set of
negative evidence variables as $\mathcal{N}$. An assignment to each of
the query variables yields a \emph{possible world} $I$ that must
contain all positive evidence variables, i.e., $I \supseteq
\mathcal{P}$, and must not contain any negatives, i.e., $I \cap
\mathcal{N} = \emptyset$.\\

\noindent\textbf{Boolean Rules} We first present the semantics of
\emph{Boolean} inference rules. For ease of exposition only, we assume
that there is a single domain $\D$. A rule $\gamma$ is a pair $(q,w)$
such that $q$ is a Boolean query and $w$ is a real number. An example
is as follows:
\[ q() \ts R(x,y),S(y) \quad weight=w. \]
We denote the body predicates of $q$ as $body(\bar z)$ where $\bar z$
are all variables in the body of $q()$, e.g., $\bar z = (x,y)$ in the
example above. Given a rule $\gamma=(q,w)$ and a possible world $I$,
we define the \emph{sign of $\gamma$ on $I$} as
$\mathsf{sign}(\gamma,I)= 1$ if $q() \in I$ and $-1$ otherwise.

Given $\bar{c}\in\D^{|\bar{z}|}$, a \emph{grounding} of $q$
w.r.t.~$\bar{c}$ is a substitution $body(\bar{z}/\bar{c})$, where the
variables in $\bar{z}$ are replaced with the values in $\bar{c}$. For
example, for $q$ above with $\bar c = (a,b)$ then
$body(\bar{z}/(a,b))$ yields the grounding $R(a,b), S(b)$, which is a
conjunction of facts. The \emph{support} $\mathsf{n}(\gamma,I)$ of a rule
$\gamma$ in a possible world $I$ is the number of groundings $\bar{c}$
for which $body(\bar{z}/\bar{c})$ is satisfied in $I$:
\[
	\mathsf{n}(\gamma,I)=|\{\bar{c}\in\D^{|\bar{z}|} ~:~ I\models
	body(\bar{z}/\bar{c}) ~ \}|
\]

\noindent
The \emph{weight of $\gamma$ in $I$} is the product of three terms: 
\begin{equation}\label{eq:ruleweight}
	\mathsf{w}(\gamma, I) = w \; \mathsf{sign}(\gamma,I) \; g(\mathsf{n}(\gamma,
	I)),
\end{equation}
where $g$ is a real-valued function defined on the natural
numbers. For intuition, if $w(\gamma,I) > 0$, it adds a weight that
indicates that the world is more likely. If $w(\gamma,I) < 0$, it
indicates that the world is less likely. As motivated above, we
introduce $g$ to support multiple
semantics. Figure~\ref{fig:semantics} shows choices for $g$ that are
supported by DeepDive, which we compare in an example below.

Let $\Gamma$ be a set of Boolean rules, the weight of $\Gamma$ on a
possible world $I$ is defined as
\[
\mathsf{W}(\Gamma,I)=\sum_{\gamma\in\Gamma}\mathsf{w}(\gamma,I).
\]
This function allow us to define a probability distribution over the set $J$ of
possible worlds:
\begin{equation}\label{eq:probability}
	\Pr[I]=\mathsf{Z}^{-1}\mathrm{exp}(\mathsf{W}(\Gamma,I)) \text{ where }
\mathsf{Z}=\sum_{I\in J} \mathrm{exp}(\mathsf{W}(\Gamma,I)), 
\end{equation}
and $\mathsf{Z}$ is called the \emph{partition function}. This
framework is able to compactly specify much more sophisticated
distributions than traditional probabilistic databases~\cite{Suciu:2011:Book}.  

\begin{example} We illustrate the semantics by example.
 From the Web, we could extract a set of relation mentions that
 supports {\em ``Barack Obama is born in Hawaii''} and another set of
 relation mentions that support {\em ``Barack Obama is born in
   Kenya.''}  These relation mentions provide conflicting information,
 and one common approach is to ``vote.'' We abstract this as up or
 down votes about a fact $q()$.
\begin{align*}
  q() & \mathop{\ts} \;Up(x) \quad & weight = 1. \\
  q() & \mathop{\ts} \;Down(x) \quad & weight = -1.
\end{align*}
We can think of this as a having a single random variable $q()$ in
which the size of $Up$ (resp. $Down$) is an evidence relation that
indicates the number of ``Up'' (resp. ``Down'') votes. There are only
two possible worlds: one in which $q() \in I$ (is true) and not. Let
$|Up|$ and $|Down|$ be the sizes of $Up$ and $Down$. Following
Equation~\ref{eq:ruleweight} and \ref{eq:probability}, we have
\[ \Pr[q()] = \frac{e^{W}}{e^{-W} + e^{W}} \]
where
\[ W = g(|Up|) - g(|Down|). \]
Consider the case when $|Up| = 10^6$ and $|Down| = 10^6 - 100$. In
some scenarios, this small number of differing votes could be due to
random noise in the data collection processes. One would expect a
probability for $q()$ close to $0.5$. In the linear semantics
$g(n)=n$, the probability of $q$ is $(1 + e^{-200})^{-1} \approx 1 -
e^{-200}$, which is extremely close to $1$. In contrast, if we set
$g(n) = \log (1 + n)$, then $\Pr[q()] \approx 0.5$. Intuitively, the
probability depends on their {\em ratio} of these votes. The {\em
  logical} semantics $g(n) = \mathbb{1}_{n > 0}$ gives exactly
$\Pr[q()] = 0.5$. However, it would do the same if $|Down| = 1$. Thus,
logical semantics may ignore the strength of the voting
information. At a high level, ratio semantics can learn weights from
examples with different raw counts but similar ratios. In contrast,
linear is appropriate when the raw counts themselves are meaningful.
\label{ex:voting}\end{example}

No semantic subsumes the other, and each is appropriate in some
application. We have found that in many cases the ratio semantics is
more suitable for the application that the user wants to model.  We
show in the full version that these semantics also affect efficiency
empirically and theoretically--even for the above simple
example.  Intuitively, sampling converges faster in the logical or
ratio semantics because the distribution is less sharply peaked, which
means that the sampler is less likely to get stuck in local minima.\\
% We have found that the ratio semantics is intuitive to
%users of KBC systems. 

\noindent\textbf{Extension to General Rules.} Consider a general inference
rule $\gamma=(q,w)$, written as:
\[ q(\bar{y})\ts body(\bar{z}) \quad \text{  weight}=w(\bar{x}). \] where
$\bar{x}\subseteq\bar{z}$ and $\bar{y}\subseteq\bar{z}$. This
extension allows {\em weight tying}. Given
$\bar{b}\in\D^{|\bar{x}\cup\bar{y}|}$ where $\bar{b}_x$
(resp. $\bar{b}_y$) are the values of $\bar b$ in $\bar{x}$
(resp. $\bar{y}$), we expand $\gamma$ to a set $\Gamma$ of Boolean rules
by substituting $\bar x \cup \bar y$ with values from $\D$ in all
possible ways.
\begin{align*}
	\Gamma=\{(q_{\bar{b}_y},w_{\bar{b}_x}) ~|~ 
	q_{\bar{b}_y}()\ts body(\bar{z}/\bar{b}) \mbox{ and }
	w_{\bar{b}_x}=w(\bar{x}/\bar{b}_x)\}
\end{align*}
where each $q_{\bar{b}_y}()$ is a \emph{fresh} symbol for distinct
values of $\bar b_{t}$, and $w_{\bar{b}_x}$ is a real number. Rules
created this way may have free variables in their bodies, e.g., $q(x)
\ts R(x,y,z)$ with $w(y)$ create $|\D|^2$ different rules of the form
$q_{a}() \ts R(a,b,z)$, one for each $(a,b) \in \D^2$, and rules
created with the same value of $b$ share the same weight. Tying
weights allows one to create models succinctly.

\begin{example} \label{exp:lr}
We use the following as an example:
  \[ Class(x) \ts R(x,f) \quad weight = w(f). \]
  This declares a binary classifier as follows. Each binding for $x$
  is an object to classify as in $Class$ or not. The relation $R$
  associates each object to its features. E.g., $R(a,f)$ indicates
  that object $a$ has a feature $f$. $weight = w(f)$
  indicates that weights are functions of feature $f$; thus,
  the same weights are tied across values for $a$. This rule 
  declares a logistic regression classifier.
\end{example}

It is straightforward formal extension to let weights be functions of
the return values of UDFs as we do in DeepDive.

\subsection{Inference on Factor Graphs}

As in Figure~\ref{fig:grounding}, DeepDive explicitly constructs a
factor graph for inference and learning using a set of SQL
queries. Recall that a factor graph is a triple $(V,F,\hat{w})$ in
which $V$ is a set of nodes that correspond to Boolean random
variables, $F$ is a set of hyperedges (for $f \in F$, $f \subseteq
V$), and $\hat{w} : F \times \{0,1\}^{V} \to \mathbb{R}$ is a weight
function. We can identify possible worlds with assignments since each
node corresponds to a tuple; moreover, in DeepDive, each hyperedge $f$
corresponds to the set of groundings for a rule $\gamma$. In DeepDive,
$V$ and $F$ are explicitly created using a set of SQL queries. These
data structures are then passed to the sampler, which runs outside the
database, to estimate the marginal probability of each node or tuple
in the database. Each tuple is then reloaded into the database with
its marginal probability.

\begin{example}
  Take the database instances and rules in Figure~\ref{fig:grounding}
  as an example, each tuple in relation $R$, $S$, and $Q$ is
  a random variable, and $V$ contains all random variables. The
  inference rules $F1$ and $F2$ ground factors with the same name
  in the factor graph as illustrated in Figure~\ref{fig:grounding}.
  Both $F1$ and $F2$ are implemented as SQL in DeepDive.
\end{example}

To define the semantics, we use Equation~\ref{eq:ruleweight} to define
$\hat{w}(f,I) = w(\gamma,I)$, in which $\gamma$ is the rule
corresponding to $f$. As before, we define $\hat{W}(F,I) = \sum_{f \in
  F} \hat{w}(f,I)$, and then the probability of a possible world is
the following function:
\[ \Pr[ I ] = Z^{-1} \exp\left\{ \hat{W}(F,I) \right\} \text{ where } Z = \sum_{I \in J} \exp\{ \hat{W}(F,I)\}  \]
\noindent
The main task that DeepDive conducts on factor graphs is statistical
inference, i.e., for a given node, what is the marginal probability that this
node takes the value $1$? Since a node takes value $1$ when a tuple is
in the output, this process computes the marginal probability values
returned to users. In general, computing these marginal probabilities
is $\sharp\mathsf{P}$-hard~\cite{Wainwright:2008:Book}. Like many
other systems, DeepDive uses Gibbs sampling~\cite{Robert:2005:Book} to
estimate the marginal probability of every tuple in the database.

%% In Gibbs sampling, one constructs a sequence of possible worlds
%% $I_1,\dots, I_n, \dots$ by first selecting a random variable (tuple)
%% $t$ uniformly at random and then adding (or removing) the tuple
%% according to the following:
%% \[ \Pr[ I_{n+1} = I_{n} \cup \{t\} ] = \Pr[ t \in I \; | I[-t] = I_n  ], \]
%% where $I[-t] = I_n$ indicates that we condition on the values of all
%% tuples except $t$. We estimate the marignal probablity after $n$
%% samples by $\Pr[t] = n^{-1}|\{i : I_{i} \ni t\}|$. In every DeepDive
%% application, weights are learned by maximizing the probability
%% of evidence, which uses inference as an inner loop.

\section{Incremental KBC} \label{sec:tradeoff}

To help the KBC system developer be more efficient, we developed
techniques to incrementally perform the grounding and inference step
of KBC execution.

\paragraph*{Problem Setting}
Our approach to incrementally maintaining a KBC system runs in two
phases. {\bf (1) Incremental Grounding.} The goal of the incremental
grounding phase is to evaluate an update of the DeepDive program to
produce the ``delta'' of the modified factor graph, i.e., the modified
variables $\Delta V$ and factors $\Delta F$. This phase consists of
relational operations, and we apply classic incremental view
maintenance techniques. {\bf (2) Incremental Inference.} The goal of
incremental inference is given $(\Delta V, \Delta F)$ run statistical
inference on the changed factor graph.

\subsection{Standard Techniques: Delta Rules} \label{sec:standard}

Because \systemx is based on SQL, we are able to take advantage of
decades of work on incremental view maintenance. The input to this
phase is the same as the input to the grounding phase, a set of SQL
queries and the user schema. The output of this phase is how the
output of grounding changes, i.e., a set of modified variables $\Delta
V$ and their factors $\Delta F$. Since $V$ and $F$ are simply views
over the database, any view maintenance techniques can be applied to
incremental grounding. DeepDive uses \textsc{DRed}
  algorithm~\cite{Gupta:1993:SIGMOD} that handles both additions and
  deletions. Recall that in \textsc{DRed}, for each relation $R_i$ in
the user's schema, we create a {\em delta relation}, $R^{\delta}_i$,
with the same schema as $R_i$ and an additional column $count$. For
each tuple $t$, $t.count$ represents the number of derivations of $t$
in $R_i$.  On an update, \systemx updates delta relations in two
steps. First, for tuples in $R^{\delta}_i$, \systemx directly updates
the corresponding counts. Second, a SQL query called a {\it ``delta
  rule''}\footnote{ For example, for the grounding procedure
    illustrated in Figure~\ref{fig:grounding}, the delta rule for $F1$
    is $q^{\delta}(x) :- R^{\delta}(x,y)$.}  is executed which
processes these counts to generate modified variables $\Delta V$ and
factors $\Delta F$.  We found that the overhead \textsc{DRed} is
modest and the gains may be substantial, and so DeepDive always runs
\textsc{DRed}--except on initial load.

\subsection{Novel Techniques for Incremental Maintenance of Inference}
\label{sec:inctech}

We present three techniques for the incremental inference phase on
factor graphs: given the set of modified variables $\Delta V$ and
modified factors $\Delta F$ produced in the incremental grounding
phase, our goal is to compute the new distribution. We split the
problem into two phases. In the {\em materialization phase}, we are
given access to the entire DeepDive program, and we attempt to store
information about the original distribution, denoted $\Pr^{(0)}$. Each
approach will store different information to use in the next phase,
called the {\em inference phase}. The input to the inference phase is
the materialized data from the preceding phase and the changes made to
the factor graph, the modified variables $\Delta V$ and factors
$\Delta F$. Our goal is to perform inference with respect to the
changed distribution, denoted $\Pr^{(\Delta)}$. For each approach, we
study its space and time costs for materialization and the time cost
for inference. We also analyze the empirical tradeoff between the
approaches in Section~\ref{sec:tradeoff:discussion}.

\subsubsection{Strawman: Complete Materialization}
The strawman approach, complete materialization, is computationally
expensive and often infeasible. We use it to set a baseline for 
other approaches.

\textbf{Materialization Phase} We explicitly store the value of the
probability $\Pr[I]$ for every possible world $I$. This approach has
perfect fidelity, but storing all possible worlds takes an exponential
amount of space and time in the number of variables in the original
factor graph. Thus, the strawman approach is often infeasible on even
moderate-sized graphs.\footnote{Compared with running inference from
  scratch, the strawman approach does not materialize any
  factors. Therefore, it is necessary for strawman to enumerate each
  possible world and save their probability because we do not know
  {\em a priori} which possible world will be used in the inference
  phase.}

\textbf{Inference Phase} 
We use Gibbs sampling: even if the distribution has changed to
$\Pr^{(\Delta)}$, we only need access to the new factors in
$\Delta \Pi_\mathcal{F}$ and to $\Pr[I]$ to perform the Gibbs update. The
speed improvement arises from the fact that we do not need to access all
factors from the original graph and perform a computation with them, since we
can look them up in $\Pr[I]$.

\subsubsection{Sampling Approach} \label{sec:sampling}
The sampling approach is a standard technique to improve over
the strawman approach by storing a set of possible worlds sampled from
the original distribution instead of storing all possible
worlds. However, as the updated distribution $\Pr^{(\Delta)}$ is different from
the distribution used to draw the stored samples, we cannot reuse them
directly. We use a (standard) Metropolis-Hastings scheme to ensure convergence
to the updated distribution.

\textbf{Materialization Phase} In the materialization phase, we store
a set of possible worlds drawn from the original distribution. For
each variable, we store the set of samples as a {\em tuple bundle}, as
in MCDB~\cite{Jampani:2008:SIGMOD}. A single sample for one random
variable only requires 1 bit of storage. Therefore, the sampling
approach can be efficient in terms of materialization space. In the
KBC systems we evaluated, $100$ samples require less than 5\% of the
space of the original factor graph.

\textbf{Inference Phase} We use the samples to generate {\em
  proposals} and adapt them to estimate the up-to-date
distribution. This idea of using samples from similar distributions as
proposals is standard in statistics, e.g., importance sampling,
rejection sampling, and different variants of Metropolis{-\penalty0\hskip0pt\relax}Hastings
methods. After investigating these approaches, in \systemx, we use the
{\em independent Metropolis-Hastings}
approach~\cite{Andrieu:2003:ML,Robert:2005:Book}, which generates
proposal samples and accepts these samples with an {\em acceptance
  test}. We choose this method only because the acceptance test can be
evaluated using the sample, $\Delta V$, and $\Delta F$--without the
entire factor graph. Thus, we may fetch many fewer factors than in the
original graph, but we still converge to the correct answer.

The fraction of accepted samples is called the {\em acceptance rate},
and it is a key parameter in the efficiency of this approach. The
approach may exhaust the stored samples, in which case the method
resorts to another evaluation method or generates fresh samples. 

\begin{algorithm}[t]
\scriptsize
\begin{algorithmic}[1]
\Require Factor graph $FG = (V, F)$,
regularization parameter $\lambda$, number of samples $N$ for approximation.
\Ensure An approximated factor graph $FG' =
(V, F')$
\algrule

\State $I_1,...,I_N \leftarrow N$ samples drawn from  $FG$.

\State $NZ \leftarrow $ \{$(v_i, v_j)$: $v_i$
and $v_j$ are in some factor in $FG$\}.

\State $M \leftarrow $ covariance matrix estimated using $I_1,...,I_N$, such
that $M_{ij}$ is the covariance between variable $i$ and variable $j$. Set
$M_{ij} = 0$ if $(v_i, v_j) \not\in NZ$.

\State Solve the following optimization problem using gradient
descent~\cite{Banerjee:2008:JMLR}, and let the result be $\hat{X}$
\begin{eqnarray*}
~ & \arg\max_X &\log \det {X}\\
~ & s.t., & X_{kk}=M_{kk} + 1/3,\\
~ &       & |X_{kj}-M_{kj}| \leq \lambda\\
~ &       & X_{kj} = 0~if~(v_k, v_j) \not \in NZ
\end{eqnarray*}

\ForAll{$i,j$ s.t. $\hat{X}_{ij} \ne 0$}
\State Add in $F'$ a factor from $(v_i, v_j)$ with
weight $\hat{X}_{ij}$.

\EndFor

\State {\bf return} $FG' = (V, F')$.

\end{algorithmic}
\caption{Variational Approach (Materialization)}
\label{alg:var}
\end{algorithm}

\begin{figure*}[t]
\centering
\scriptsize
\includegraphics[width=0.98\textwidth]{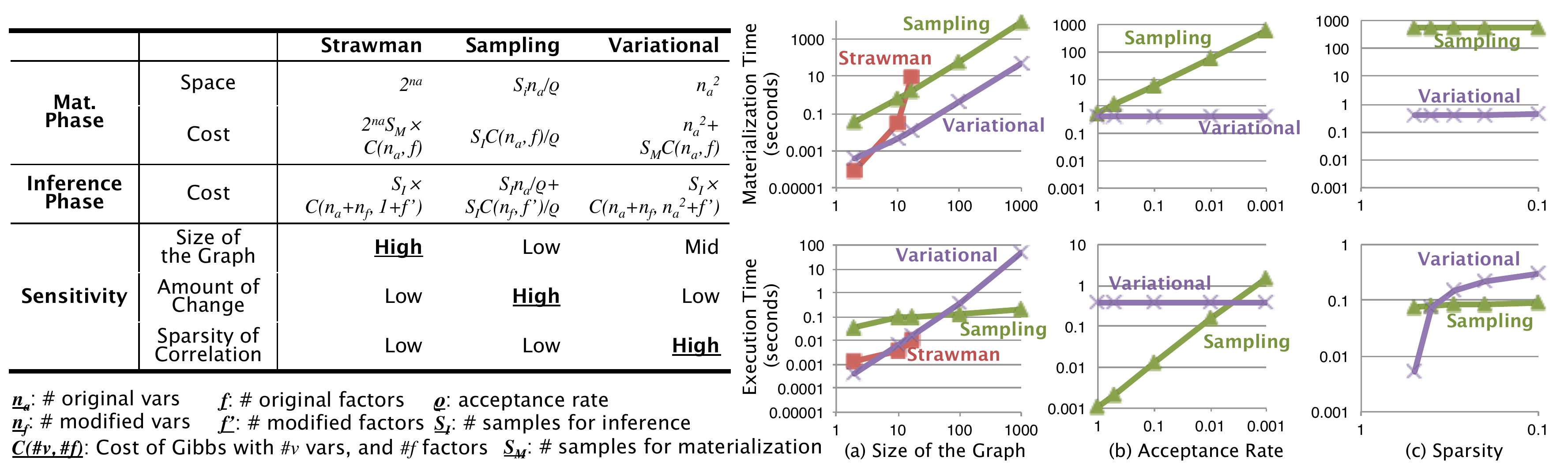}
\caption{A Summary of the tradeoffs. Left: An analytical cost model
  for different approaches; Right: Empirical examples that illustrate
  the tradeoff space. All converge to $<$0.1\% loss, and
  thus, have comparable quality.}
\label{fig:cost}
\end{figure*}

\subsubsection{Variational Approach}
The intuition behind our variational approach is as follows: rather
than storing the exact original distribution, we store a factor graph
with \emph{fewer factors} that approximates the original
distribution. On the smaller graph, running inference and learning is
often faster.

\textbf{Materialization Phase} The key idea of the variational
approach is to approximate the distribution using simpler or sparser
correlations. To learn a sparser model, we use Algorithm~\ref{alg:var}
which is a log-determinant relaxation~\cite{Wainwright:2006:TSP} with
a $\ell_1$ penalty term~\cite{Banerjee:2008:JMLR}. We want to
understand its strengths and limitations on KBC problems, which is
novel. This approach uses standard techniques for learning that are
already implemented in DeepDive~\cite{Zhang:2014:VLDB}.

The input is the original factor graph and two parameters: the number
of samples $N$ to use for approximating the covariance matrix, and the
regularization parameter $\lambda$, which controls the sparsity of the
approximation. The output is a new factor graph that has only binary
potentials. The intuition for this procedure comes from graphical
model structure learning: an entry $(i,j)$ is present in the inverse
covariance matrix only if variables $i$ and $j$ are connected in the
factor graph. Given these inputs, the algorithm first draws a set of
$N$ possible worlds by running Gibbs sampling on the original factor
graph. It then estimates the covariance matrix based on these samples
(Lines 1-3). Using the estimated covariance matrix, our algorithms
solves the optimization problem in Line 4 to estimate the inverse
covariance matrix $\hat{X}$. Then, the algorithm creates one factor
for each pair of variables such that the corresponding entry in
$\hat{X}$ is non-zero, using the value in $\hat{X}$ as the new weight
(Line 5-7). These are all the factors of the approximated factor graph
(Line 8).

\textbf{Inference Phase} Given an update to the factor graph (e.g.,
new variables or new factors), we simply apply this update to the
approximated graph, and run inference and learning directly on the
resulting factor graph. As shown in Figure~\ref{fig:cost}(c), the
execution time of the variational approach is roughly linear in the
sparsity of the approximated factor graph.  Indeed, the execution time
of running statistical inference using Gibbs sampling is dominated by
the time needed to fetch the factors for each random variable, which
is an expensive operation requiring random access. Therefore, as the
approximated graph becomes sparser, the number of factors decreases and
so does the running time.

\textbf{Parameter Tuning} We are among the first to use these methods
in KBC applications, and there is little literature about tuning
$\lambda$. Intuitively, the smaller $\lambda$ is, the better the
approximation is--but the less sparse the approximation is. To
understand the impact of $\lambda$ on quality, we show in
Figure~\ref{fig:knee} the quality F1 score of a DeepDive program
called News (see Section~\ref{sec:exp}) as we vary the regularization
parameter. As long as the regularization parameter $\lambda$ is small
(e.g., less than 0.1), the quality does not change significantly. In
all of our applications we observe that there is a relatively large
``safe'' region from which to choose $\lambda$.  In fact, for all five
systems in Section~\ref{sec:exp}, even if we set $\lambda$ at $0.1$ or
0.01, the impact on quality is minimal (within 1\%), while the impact
on speed is significant (up to an order of magnitude). Based on
Figure~\ref{fig:knee}, \systemx supports a simple search protocol to
set $\lambda$. We start with a small $\lambda$, e.g., $0.001$, and
increase it by 10$\times$ until the KL-divergence is larger than a
user-specified threshold, specified as a parameter in \systemx.

\subsubsection{Tradeoffs} \label{sec:tradeoff:discussion}

We studied the tradeoff between different approaches and summarize the
empirical results of our study in Figure~\ref{fig:cost}. The
performance of different approaches may differ by more than two
orders of magnitude, and neither of them dominates the other. We use
a synthetic factor graph with pairwise factors\footnote{In
  Figure~\ref{fig:cost}, the numbers are reported for a factor graph
  whose factor weights are sampled at random from $[-0.5,0.5]$. We
  also experimented with different intervals ($[-0.1,0.1]$, $[-1,1]$,
  $[-10,10]$), but these had no impact on the tradeoff} and control the
following axes:

\begin{description}
\compactify
\item {\bf (1) Number of variables} in the factor graph. In our experiments, we
	set the number of variables to values in \{2, 10, 17, 100, 1000,
	10000\}.
         
      \item {\bf (2) Amount of change}. How much the distribution
        changes affects efficiency, which manifests itself in the
        acceptance rate: the smaller the acceptance rate is, the more
        difference there will be in the distribution. We set the
        acceptance rate to values in \{1.0, 0.5, 0.1, 0.01\}.

      \item {\bf (3) Sparsity of correlations}. This is the ratio between the number
	of non-zero weights and the total weight. We set the sparsity to
	values in $\{0.1,0.2,0.3,0.4,0.5,1.0\}$ by selecting uniformly at random a
	subset of factors and set their weight to zero.
\end{description}

We now discuss the results of our exploration of the tradeoff space, presented
in Figure~\ref{fig:cost}(a-c).\\

\noindent
{\bf Size of the Factor Graph} Since the materialization cost of the
strawman is exponential in the size of the factor graph, we observe
that, for graphs with more than 20 variables, the strawman is
significantly slower than either the sampling approach or the variational
approach.  Factor graphs arising from KBC systems usually contain a
much larger number of variables; therefore, from now on, we focus on
the tradeoff between sampling and variational approaches.\\[-2pt]

\noindent
{\bf Amount of Change} As shown in Figure~\ref{fig:cost}(b), when the
acceptance rate is high, the sampling approach could outperform the
variational one by more than two orders of magnitude. When the
acceptance rate is high, the sampling approach requires no computation
and so is much faster than Gibbs sampling. In contrast, when the
acceptance rate is low, e.g., 0.1\%, the variational approach could be
more than 5$\times$ faster than the sampling approach. An acceptance
rate lower than $0.1$\% occurs for KBC operations when one updates the
training data, adds many new features, or concept drift
happens during the development of KBC systems.\\[-2pt]

\noindent
{\bf Sparsity of Correlations} As shown in Figure~\ref{fig:cost}(c),
when the original factor graph is sparse, the variational approach can
be 11$\times$ faster than the sampling approach. This is because the
approximate factor graph contains less than 10\% of the factors than
the original graph, and it is therefore much faster to run inference
on the approximate graph. On the other hand, if the original factor
graph is too dense, the variational approach could be more than
7$\times$ slower than the sampling one, as it is essentially
performing inference on a factor graph with a size similar to that of
the original graph.\\[-2pt]

\noindent
{\bf Discussion: Theoretical Guarantees} We discuss the
  theoretical guarantee that each materialization strategy
  provides. Each materialization method inherits the guarantee of that
  inference technique. The strawman approach retains the same
  guarantees as Gibbs sampling; For the sampling approach use standard
  Metropolis-Hasting scheme. Given enough time, this approach will
  converge to the true distribution. For the variational approach, the
  guarantees are more subtle and we point the reader to the
  consistency of structure estimation of Gaussian Markov random
  field~\cite{Ravikumar:2018:NIPS} and log-determinate
  relaxation~\cite{Wainwright:2006:TSP}. These results are
  theoretically incomparable, motivating our empirical study. 

\begin{figure}[t]
\centering
\scriptsize
\includegraphics[width=0.4\textwidth]{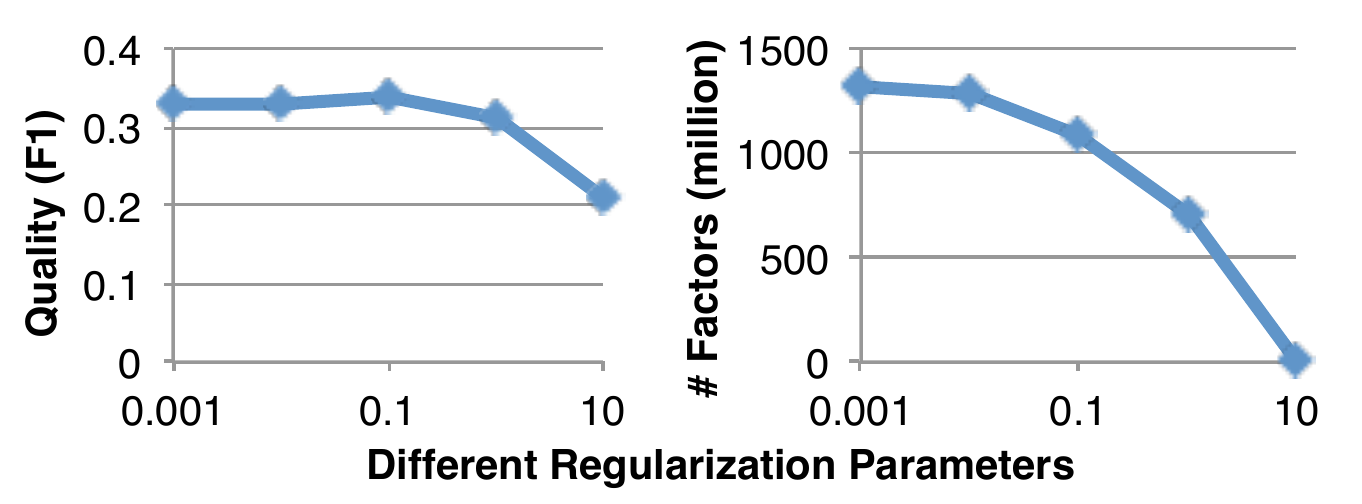}
\caption{Quality and number of factors of the News corpus with
  different regularization parameters for the variational approach.}
\label{fig:knee}
\end{figure}

% \begin{figure}[t]
% \centering
% \includegraphics[width=0.5\textwidth]{}
% \caption{An Illustration of the Common Structure
% of a Factor Graph Grounded with a KBC Workload.
% Dotted lines are inter-layer correlations; and
% solid lines are intra-layer correlations. The
% correlation for Layer 1 and Layer 2 often
% appear inside each document.}
% \label{fig:kbcfg}
% \end{figure}

\subsection{Choosing Between Different Approaches}

From the study of the tradeoff space, neither the sampling approach
nor the variational approach dominates the other, and their relative
performance depends on how they are being used in KBC. We propose to
materialize the factor graph using both the sampling approach {\em
  and} the variational approach, and defer the decision to the
inference phase when we can observe the workload.

\textbf{Materialization Phase} Both approaches need samples from the
original factor graph, and this is the dominant cost during
materialization. A key question is {\em ``How many samples should we
  collect?''}  We experimented with several heuristic methods to
estimate the number of samples that are needed, which requires
understanding how likely future changes are, statistical
considerations, etc. These approaches were difficult for users to
understand, so \systemx takes a best-effort approach: it generates as
many samples as possible when idle or within a user-specified time
interval.

\textbf{Inference Phase} Based on the tradeoffs analysis, we
developed a \emph{rule-based optimizer} with the following set of rules:
\begin{itemize}
\compactify
	\item If an update does not change the structure of the graph,
          choose the sampling approach.
	
\item If an update modifies the evidence, choose the variational
  approach.

	\item If an update introduces new features, choose the
          sampling approach.

          \item Finally, if we run out of samples, use
            the variational approach.
\end{itemize}

\noindent
This simple set of rules is used in our
experiments.

\begin{figure}
\scriptsize
\centering
\begin{tabular}{c|rrr|rr}
\hline
{\bf System} & \# Docs & \# Rels & \# Rules & \# vars & \# factors\\
\hline
Adversarial        & 5M     & 1             & 10        & 0.1B & 0.4B     \\
News               & 1.8M    &  34          & 22       & 0.2B & 1.2B\\ 
Genomics           & 0.2M    & 3            & 15      & 0.02B  & 0.1B     \\
Pharma.   & 0.6M    & 9            & 24      & 0.2B & 1.2B\\
Paleontology       & 0.3M    &  8           & 29       & 0.3B & 0.4B\\
\hline
\end{tabular}
\caption{Statistics of KBC systems we used in experiments. 
The \# vars and \# factors are for factor graphs that
contain all rules.}
\label{tab:datasets}
\end{figure}

\begin{figure}
\scriptsize
\centering
\begin{tabular}{c|l}
\hline
{\bf Rule} & {\bf Description}\\
\hline
\multirow{2}{*}{\bf A1} & Calculate marginal probability for variables\\
                        & 
or variable pairs.\\
\hline
{\bf FE1} & Shallow NLP features (e.g., word sequence)\\
{\bf FE2} & Deeper NLP features (e.g., dependency path)\\
\hline
{\bf I1} & Inference rules (e.g., symmetrical \textsf{HasSpouse}).\\
\hline
{\bf S1} & Positive examples\\
{\bf S2} & Negative examples\\
\hline
\end{tabular}
\caption{The set of rules in News. See Section~\ref{sec:expsetting}}
\label{tab:rules}
\end{figure}

\section{Experiments} \label{sec:exp}
We conducted an experimental evaluation of \systemx for
incremental maintenance of KBC systems.

\subsection{Experimental Settings} \label{sec:expsetting}
To evaluate \systemx, we used DeepDive programs developed by our users
over the last three years from paleontologists, geologists,
biologists, a defense contractor, and a KBC competition.  These are
high-quality KBC systems: two of our KBC systems for natural sciences
achieved quality comparable to (and sometimes better than) human
experts, as assessed by double-blind experiments, and our KBC system
for a KBC competition is the top system among all 45 submissions from
18 teams as assessed by professional annotators. To simulate the
development process, we took snapshots of DeepDive programs at the end
of every development iteration, and we use this dataset of snapshots
in the experiments to understand our hypothesis that incremental
techniques can be used to improve development speed.\\[-2pt]

\begin{figure*}[t]
\scriptsize
\centering
\begin{tabular}{c|rrr|rrr|rrr|rrr|rrr}
\hline
\multirow{2}{*}{\bf Rule}
    & \multicolumn{3}{c|}{\bf Adversarial} & \multicolumn{3}{c|}{\bf News} & \multicolumn{3}{c|}{\bf Genomics}  & \multicolumn{3}{c|}{\bf Pharma.} & \multicolumn{3}{c}{\bf Paleontology} \\
& Rerun  & Inc.  & $\times$ & Rerun  & Inc.  & $\times$ & Rerun  & Inc.  & $\times$ & Rerun  & Inc.  & $\times$ & Rerun  & Inc.  & $\times$\\
\hline
{\bf A1}  & 1.0 &0.03 &33$\times$ &2.2  &0.02 &112$\times$  &0.3  &0.01 &30$\times$ &3.6  &0.11 &33$\times$ &2.8  &0.3  &10$\times$     \\
\hline
{\bf FE1} & 1.1 &0.2  &7$\times$  &2.7  &0.3  &10$\times$ &0.4  &0.07 &6$\times$  &3.8  &0.3  &12$\times$ &3.0  &0.4  &7$\times$ \\
{\bf FE2} & 1.2 &0.2  &6$\times$ &3.0 &0.3  &10$\times$ &0.4  &0.07 &6$\times$  &4.2  &0.3  &12$\times$ &3.3  &0.4  &8$\times$ \\
\hline
{\bf I1}  &1.3  &0.2  &6$\times$ &3.6 &0.3  &10$\times$ &0.5  &0.09 &6$\times$    &4.4  &1.4  &3$\times$  &3.8  &0.5  &8$\times$     \\
\hline
{\bf S1}  &1.3  &0.2  &6$\times$ &3.6 &0.4  &8$\times$  &0.6  &0.1  &6$\times$    &4.7  &1.7  &3$\times$  &4.0  &0.5  &7$\times$ \\ 
{\bf S2}  &1.3  &0.3  &5$\times$ &3.6 &0.5  &7$\times$  &0.7  &0.1  &7$\times$  &4.8  &2.3  &3$\times$  &4.1  &0.6  &7$\times$ \\
\hline
\end{tabular}
\caption{End-to-end efficiency of incremental inference and learning. All execution times are in hours. The column $\times$ refers to the speedup of \textsc{Incremental} (Inc.) over \textsc{Rerun}.}
\label{fig:e2e}
\end{figure*}

\noindent
\textbf{Datasets and Workloads} To study the efficiency of \systemx,
we selected five KBC systems, namely (1) News, (2) Genomics, (3)
Adversarial, (4) Pharmacogenomics, and (5) Paleontology. Their names
refers to the specific domains on which they focus.
Figure~\ref{tab:datasets} illustrates the statistics of these KBC
systems and of their input datasets. We group all rules in each system
into six rule templates with four workload categories. We focus on the
News system below.

The News system builds a knowledge base between persons, locations,
and organizations, and contains 34 different relations, e.g.,
\textsf{HasSpouse} or \textsf{MemberOf}. The input to the KBC system
is a corpus that contains 1.8 million news articles and Web pages. We
use four types of rules in News in our experiments, as shown in
Figure~\ref{tab:rules}, error analysis (rule A1), candidate generation
and feature extraction (FE1, FE2), supervision (S1, S2), and inference
(I1), corresponding to the steps where these rules are used.

Other applications are different in terms of the quality of the
text. We choose these systems as they span a large range in the
spectrum of quality: Adversarial contains advertisements collected
from websites where each document may have only 1-2 sentences with
grammatical errors; in contrast, Paleontology contains well-curated
journal articles with precise, unambiguous writing and simple
relationships. Genomics and Pharma have precise texts, but the goal is
to extract relationships that are more linguistically ambiguous
compared to the Paleontology text. News has slightly degraded writing
and ambiguous relationships, e.g., ``member of.''  Rules with the same
prefix, e.g., FE1 and FE2, belong to the same category, e.g., feature
extraction.\\[-2pt]

\noindent
\textbf{DeepDive Details}
\systemx is implemented in Scala and C++, and we use Greenplum to
handle all SQL. All feature extractors are written in Python. The
statistical inference and learning and the incremental maintenance
component are all written in C++. All experiments are run on a machine
with four CPUs (each CPU is a 12-core 2.40 GHz Xeon E5-4657L), 1 TB
RAM, and 12$\times$1TB hard drives and running Ubuntu 12.04. For these
experiments, we compiled \systemx with Scala 2.11.2, g++-4.9.0 with
-O3 optimization, and Python 2.7.3. In Genomics and Adversarial,
Python 3.4.0 is used for feature extractors.

\begin{figure}[t]
\centering
\includegraphics[width=\columnwidth]{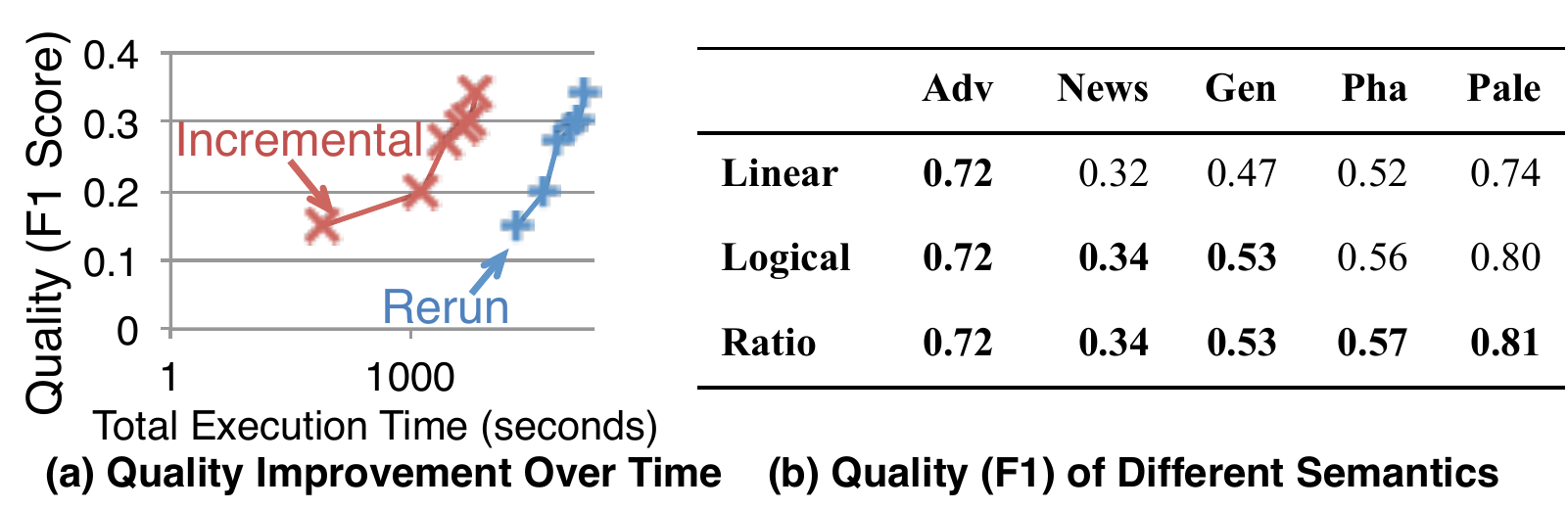}
\caption{(a) Quality improvement over time; (b) Quality for different semantics.}
\label{fig:quality}
\end{figure}

\subsection{End-to-end Performance and Quality}
We built a modified version of \systemx called \textsc{Rerun}, which
given an update on the KBC system, runs the DeepDive program from
scratch. \systemx, which uses all techniques, is called
\textsc{Incremental}. The results of our evaluation show that \systemx
is able to speed up the development of high-quality KBC systems
through incremental maintenance with little impact on quality. We set
the number of samples to collect during execution to \{10, 100, 1000\}
and the number of samples to collect during materialization to
\{1000, 2000\}. We report results for $(1000,2000)$, as results for
other combinations of parameters are similar. \\[-2pt]

\noindent
\textbf{Quality Over Time} \label{sec:exp:quality}
We first compare \textsc{Rerun} and \textsc{Incremental} in terms of
the wait time that developers experience to improve the quality of a
KBC system. We focus on News because it is a well-known benchmark
competition. We run all six rules sequentially for both \textsc{Rerun}
and \textsc{Incremental}, and after executing each rule, we report the
quality of the system measured by the F1 score and the {\em
  cumulative} execution time. Materialization in the
\textsc{Incremental} system is performed only
once. Figure~\ref{fig:quality}(a) shows the results. Using
\textsc{Incremental} takes significantly less time than
\textsc{Rerun} to achieve the same quality. To achieve an F1 score of
0.36 (a competition-winning score), \textsc{Incremental} is 22$\times$
faster than \textsc{Rerun}. Indeed, each run of \textsc{Rerun} takes
$\approx6$ hours, while a run of \textsc{Incremental} takes at most 30
minutes.

We further compare the facts extracted by \textsc{Incremental} and
\textsc{Rerun} and find that these two systems not only have similar
end-to-end quality, but are also similar enough to support common
debugging tasks. We examine the facts with high-confidence in
\textsc{Rerun} ($>0.9$ probability), 99\% of them also appear in
\textsc{Incremental}, and vice versa. High confidence extractions are
used by the developer to debug precision issues. Among all
facts, we find that at most 4\% of them have a probability that
differs by more than $0.05$.  The similarity between snapshots
suggests, our incremental maintenance techniques can be used for
debugging.\\[-2pt]

\noindent
\textbf{Efficiency of Evaluating Updates}
We now compare \textsc{Rerun} and \textsc{Incremental} in terms of
their speed in evaluating a given update to the KBC system. To better
understand the impact of our technical contribution, we divide the
total execution time into parts: (1) the time used for feature
extraction and grounding; and (2) the time used for statistical
inference and learning. We implemented classical incremental
materialization techniques for feature extraction and grounding, which
achieves up to a 360$\times$ speedup for rule FE1
in News. We get this speedup for
free using standard RDBMS techniques, a key design decision in
DeepDive.

Figure~\ref{fig:e2e} shows the execution time of statistical inference
and learning for each update on different systems. We see from
Figure~\ref{fig:e2e} that \textsc{Incremental} achieves a $7\times$ to
$112\times$ speedup for News across all categories of rules. The
analysis rule A1 achieves the highest speedup -- this is not
surprising because, after applying A1, we do not need to rerun
statistical learning, and the updated distribution does not change
compared with the original distribution, so the sampling approach has
a 100\% acceptance rate. The execution of rules for feature extraction
(FE1, FE2), supervision (S1, S2), and inference (I1) has a $10\times$
speedup.  For these rules, the speedup over \textsc{Rerun} is to be
attributed to the fact that the materialized graph contains only
$10\%$ of the factors in the full original graph.  Below, we show that
both the sampling approach and variational approach contribute to the
speed-up.  Compared with A1, the speedup is smaller because these
rules produce a factor graph whose distribution changes more than
A1. Because the difference in distribution is larger, the benefit of
incremental evaluation is lower.

The execution of other KBC applications showed similar speedups, but
there are also several interesting data points. For Pharmacogenomics,
rule I1 speeds-up only $3\times$. This is caused by the fact that I1
introduces many new factors, and the new factor graph is 1.4$\times$
larger than the original one. In this case, \systemx needs to evaluate
those new factors, which is expensive. For Paleontology, we see that
the analysis rule A1 gets a $10\times$ speed-up because as illustrated
in the corpus statistics (Figure~\ref{tab:datasets}), the Paleontology
factor graph has fewer factors for each variable than other
systems. Therefore, executing inference on the whole factor graph is
cheaper.\\[-2pt]

\noindent
\textbf{Materialization Time} 
One factor that we need to consider is the materialization time for
\textsc{Incremental}.  \textsc{Incremental} took 12 hours to complete
the materialization (2000 samples), for each of the five systems.
Most of this time is spent in getting $2\times$ more samples than for
a single run of \textsc{Rerun}. We argue that paying this cost is
worthwhile given that it is a one-time cost and the materialization
can be used for many successive updates, amortizing the one-time cost.

\begin{figure}[t]
\vspace{-.75em} % XXX to prevent Sec4 widow
\centering
\includegraphics[width=0.4\textwidth]{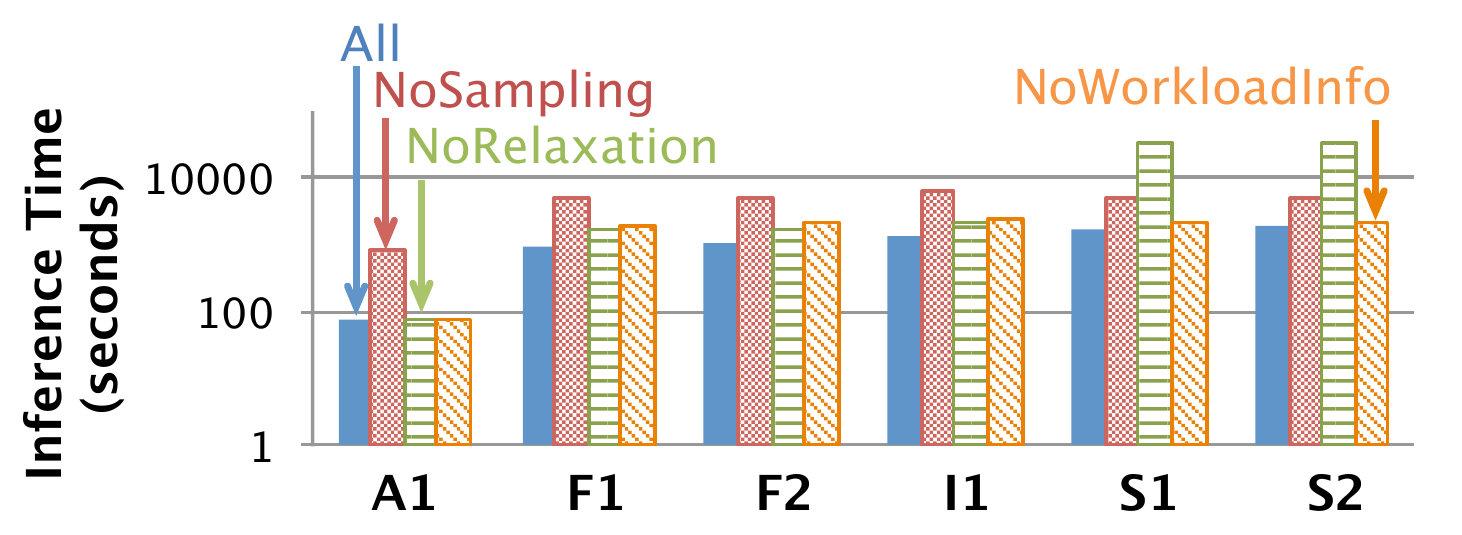}
\caption{Study of the tradeoff space on News.}
\label{fig:lesion}
\vspace{-.75em} % XXX to prevent Sec4 widow
\end{figure}

%\begin{figure}[t]
%\centering
%\includegraphics[width=0.35\textwidth]{}
%\caption{Lesion Study of Materialization Strategies.}
%\label{fig:lesion}
%\end{figure}

\subsection{Lesion Studies}

We conducted lesion studies to verify the effect of the tradeoff space
on the performance of \systemx. In each lesion study, we disable a
component of \systemx, and leave all other components untouched. We
report the execution time for statistical inference and learning.

We evaluate the impact of each materialization strategy on the final
end-to-end performance. We disabled either the sampling approach or
the variational approach and left all other components of the system
untouched. Figure~\ref{fig:lesion} shows the results for
News. Disabling either the sampling approach or the variational
approach slows down the execution compared to the ``full'' system. For
analysis rule A1, disabling the sampling approach leads to a more than
$11\times$ slow down, because the sampling approach has, for this
rule, a 100\% acceptance rate because the distribution does not
change. For feature extraction rules, disabling the sampling approach
slows down the system by $5\times$ because it forces the use of the
variational approach even when the distribution for a group of
variables does not change. For supervision rules, disabling the
variational approach is 36$\times$ slower because the introduction of
training examples decreases the acceptance rate of the sampling
approach.\\[-2pt]

\noindent
\textbf{Optimizer}
Using different materialization strategies for different groups of
variables positively affects the performance of \systemx. We compare
\textsc{Incremental} with a strong baseline \textsc{NoWorkloadInfo}
which, for each group, first runs the sampling approach. After all
samples have been used, we switch to the variational approach. Note
that this baseline is stronger than the strategy that fixes the same
strategy for all groups.  Figure~\ref{fig:lesion} shows the results of
the experiment. We see that with the ability to choose between the
sampling approach and variational approach according to the workload,
\systemx can be up to $2\times$ faster than \textsc{NoWorkloadInfo}.

\section{Conclusion} \label{sec:conclusion}
We described the \systemx approach to KBC and our experience building
KBC systems over the last few years. To improve quality, we argued
that a key challenge is to accelerate the development loop. We
described the semantic choices that we made in our language. By
building on SQL, DeepDive is able to use classical techniques to
provide incremental processing for the SQL components. However, these
classical techniques do not help with statistical inference, and we
described a novel tradeoff space for approximate inference
techniques. We used these approximate inference techniques to improve
end-to-end execution time in the face of changes both to the program
and the data; they improved system performance by two orders of
magnitude in five real KBC scenarios while keeping the quality high
enough to aid in the development process.

\par~\\ % for space before Acknowledgments
{\scriptsize
\noindent
{\bf Acknowledgments.}
We gratefully acknowledge the support of
the Defense Advanced Research Projects Agency (DARPA) XDATA program under No.\ FA8750-12-2-0335 and
DEFT program under No.\ FA8750-13-2-0039,
DARPA's MEMEX program and SIMPLEX program,
the National Science Foundation (NSF) CAREER Award under No.\ IIS-1353606,
the Office of Naval Research (ONR) under awards No.\ N000141210041 and No.\ N000141310129,
the National Institutes of Health Grant U54EB020405 awarded by the National Institute of Biomedical Imaging and Bioengineering (NIBIB) through funds provided by the trans-NIH Big Data to Knowledge (BD2K) initiative,
the Sloan Research Fellowship,
the Moore Foundation,
American Family Insurance,
Google, and
Toshiba.
Any opinions, findings, and conclusions or recommendations expressed in this material are those of the authors and do not necessarily reflect the views of DARPA, AFRL, NSF, ONR, NIH, or the U.S. government.
}
%%%%%%%%%%%%%%%%%%%%%%%%%%%%%%%%%%%%%%%%%%%%
{\scriptsize \bibliographystyle{abbrv} \bibliography{inc}}

\balancecolumns
\clearpage
\appendix

\section{Additional Theoretical Details}

We find that the three semantics that we defined,
namely, Linear, Logical, and Ratio, have impact
to the convergence speed of the sampling procedure.
We describe these results in Section~\ref{app:rs}
and provide proof in Section~\ref{app:proof}.

\subsection{Convergence Results} \label{app:rs}

We describe our results on convergence rates, which are summarized in
Figure~\ref{tab:convergenceRates}. We first describe results for the
Voting program from Example~\ref{ex:voting}, and summarize those
results in a theorem. We now introduce the standard metric of convergence
in Markov Chain theory~\cite{Jerrum:1993:SIAM}.

\begin{definition}
The \emph{total variation distance} between two probability measures
$\mathbb{P}$ and $\mathbb{P}'$ over the same sample space $\Omega$ is
defined as
\[
\| \mathbb{P} - \mathbb{P}' \|_{\mathrm{tv}}
=
\sup_{A \subset \Omega} | \mathbb{P}(A) - \mathbb{P}'(A) |,
\]
that is, it represents the largest difference in the probabilities that
$\mathbb{P}$ and $\mathbb{P}'$ assign to the same event.
\end{definition}

In Gibbs sampling, the total variation distance is  a metric
of comparison between the actual distribution $\mathbb{P}_k$ achieved
at some timestep $k$ and the equilibrium distribution $\pi$.

\paragraph*{Types of Programs} 
We consider two families of programs motivated by our work in KBC:
voting programs and hierarchical programs.  The voting program is used
to continue the example, but some KBC programs are hierarchical (13/14
KBC systems from the literature). We consider some enhancements that
make our programs more realistic. We allow every tuple (IDB or EDB) to
have its own weight, i.e., we allow every ground atom $R(a)$ to have a
rule $w_a\; : R(a)$. Moreover, we allow any atom to be marked as
evidence (or not), which means its value is fixed. We assume that all
weights are {\em not} functions of the number of variables (and so are
$O(1)$). Finally, we consider two types of programs. The first we call
{\em voting programs}, which are a generalization of
Example~\ref{ex:voting} in which any subset of variables may be
evidence and the weights may be distinct.

\begin{proposition}
\label{propConvergenceRates}
Let $\varepsilon > 0$. Consider Gibbs sampling running on a particular
factor graph with $n$ variables, there exists a $\tau(n)$ that satisfies the
upper bounds in Figure~\ref{tab:convergenceRates} and such that $\| \mathbb{P}_k -
\pi \|_{\mathrm{tv}} \le \varepsilon$ after $\tau(n) \log(1 +
\varepsilon^{-1})$ steps.  Furthermore, for each of the classes, we
can construct a graph on which the minimum time to achieve total
variation distance $\varepsilon$ is at least the lower bound in
Figure~\ref{tab:convergenceRates}.
\end{proposition}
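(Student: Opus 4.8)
The plan is to treat the two claims separately—first the upper bounds, then the matching lower bounds—handling each program class (voting and hierarchical) and each semantics $g \in \{$Linear, Ratio, Logical$\}$ as a case. For the upper bounds I would work directly with the Gibbs transition kernel $P$ on the state space $\Omega$ of possible worlds (assignments to the non-evidence variables) and bound its mixing time via a contraction/coupling argument. The uniform $\log(1+\varepsilon^{-1})$ dependence on $\varepsilon$ is exactly the signature of geometric convergence: if I can exhibit a coupling of two chains $(X_k, Y_k)$ with $\mathbb{E}[d(X_{k+1}, Y_{k+1}) \mid X_k, Y_k] \le (1 - 1/\tau(n))\, d(X_k, Y_k)$ in the Hamming metric $d$ on the non-evidence variables, then $\|\mathbb{P}_k - \pi\|_{\mathrm{tv}} \le e^{-k/\tau(n)} \cdot d_{\max}$, and setting the right-hand side to $\varepsilon$ yields the stated $\tau(n)\log(1+\varepsilon^{-1})$ bound. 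So the real content of the upper bound is to compute, for each class and each $g$, the per-step contraction factor, which reduces to bounding the pairwise influence $\sup |\Pr[X_i = 1 \mid x_{-i}] - \Pr[X_i = 1 \mid y_{-i}]|$ over neighbors differing in one coordinate—i.e., verifying a Dobrushin-type total-influence condition.

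The key observation is that this influence is governed entirely by how sharply $g$ amplifies a single-vote change in the support $\mathsf{n}(\gamma, I)$. Under the logical semantics $g = \mathbb{1}_{\{n>0\}}$, a single flip changes the factor weight by at most $w$ regardless of $n$, so the influence is $O(1)$ uniformly and Dobrushin's condition holds with room to spare, yielding a polynomial (e.g.\ $O(n \log n)$) mixing time. Under the ratio semantics $g = \log(1+n)$ the marginal increment $g(n+1) - g(n) = \log(1 + 1/(n+1))$ decays with $n$, so the influences are again small and the same machinery applies. The linear semantics $g(n) = n$ is the adversarial case: a single flip shifts the weight by a full $w$ even when $n$ is large, the total influence can exceed one, and Dobrushin's condition fails; for linear semantics the only upper bound I expect to prove is the trivial exponential one, matching the near-point-mass behavior already visible in Example~\ref{ex:voting}.

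For the lower bounds I would use a bottleneck (conductance) argument in the style of the standard conductance--mixing-time relationship~\cite{Jerrum:1993:SIAM}. For each class I construct a specific graph whose stationary distribution $\pi$ has two exponentially dominant, well-separated modes with a thin cut between them. Concretely, for the hierarchical case (and analogously voting) under linear semantics I would couple one ``parent'' query variable to $\Theta(n)$ children through a factor whose weight, via $g(n)=n$, grows linearly in the number of true children; this produces an ``all-true'' mode and an ``all-false'' mode separated by assignments of $\pi$-measure exponentially small in $n$. Taking $S$ to be one basin, the escape mass $\sum_{x \in S, y \notin S} \pi(x) P(x,y)$ is exponentially smaller than $\pi(S)$, so the bottleneck ratio is $\Phi_* = 2^{-\Omega(n)}$, and $t_{\mathrm{mix}} \ge (1-2\varepsilon)/(2\Phi_*)$ gives the claimed exponential lower bound. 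For the ratio and logical classes the constructions instead certify that no coupling contracts faster than the stated polynomial rate: a diameter/coupon-collector argument shows the chain needs $\Omega(n)$ single-site updates merely to touch every coordinate and $\Omega(n \log n)$ to equilibrate all of them.

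The main obstacle I anticipate is the ratio-semantics upper bound. Unlike the logical case the influence does not vanish, so I must show that summing the slowly decaying increments $\log(1 + 1/(n+1))$ over all neighbors still keeps the Dobrushin total influence strictly below one \emph{uniformly} in $n$, which forces me to exploit exactly which variables co-occur in a factor (hierarchical versus voting). Securing a clean, $n$-independent contraction constant there—rather than one that degrades as the graph grows—is the most delicate accounting, and it is precisely what pins down whether the ratio entry in Figure~\ref{tab:convergenceRates} is $O(n\log n)$ or something larger.
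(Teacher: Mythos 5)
Your overall architecture matches the paper's: coupling-based upper bounds amplified by independent repetition to get the $\log(1+\varepsilon^{-1})$ factor, and bottleneck/coupon-collector arguments for the lower bounds (your conductance construction for linear semantics is morally identical to the paper's direct computation, which tracks the chain stuck in one mode via Hoeffding and compares against the symmetric stationary distribution). However, there is a genuine gap in your upper-bound argument: the uniform Dobrushin condition you invoke for the logical semantics is simply false for the voting program. The query variable $Q$ is adjacent to all $n$ vote variables through the two shared factors, and the \emph{worst-case} pairwise influence of a vote variable $u$ on $Q$ is attained at configurations with $|U \cap I| = 0$: there a single flip changes $g(|U\cap I|)$ by a full unit (the indicator $\mathbb{1}_{\{n>0\}}$ jumps from $0$ to $1$; likewise $\log(1+n)$ jumps by $\log 2$), so each pairwise influence is $\Theta(1)$ and the total influence on $Q$ is $\Theta(n)$, far above the Dobrushin threshold. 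The same failure occurs for pairwise influences among vote variables mediated by the shared factor. So no uniform Hamming contraction exists, and path coupling in its standard form cannot yield $O(n\log n)$. You correctly flag this danger for the ratio semantics, but your claim that the logical case holds ``with room to spare'' is wrong for exactly the same reason: in both semantics the bad configurations sit at support count zero.

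What is missing is a restriction-to-the-typical-region step, which is the actual content of the paper's proof. The paper runs a correlated-flip coupler for $4n\log n$ steps and conditions on the events that (i) every variable is sampled in the first $2n\log n$ steps (coupon collector), and (ii) thereafter $|U\cap I|$ and $|D\cap I|$ stay $\Omega(n)$ in \emph{both} chains for the remaining $2n\log n$ steps; each count stochastically dominates a binomial with constant parameter $p_{\mathrm{min}}$, so by Hoeffding and a union bound this event has probability $\Omega(1)$. Only on that event are the per-flip increments $f(m+1)-f(m) = O(1/n)$, whence each vote variable couples with probability $1-O(1/n)$ the last time it is sampled, $Q$ couples with probability $\Omega(1)$, and the whole chain couples with constant probability per phase; independent repetition of phases then gives the stated $\tau(n)\log(1+\varepsilon^{-1})$ bound. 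Without this conditioning your contraction constant is vacuous. One further scope remark: Figure~\ref{tab:convergenceRates} (and hence the proposition) covers only the voting program under the three semantics; the hierarchical-program guarantees are stated separately and asymptotically in the text, so your plan to prove hierarchical upper and lower bounds inside this proposition addresses claims it does not make.
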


For example, for the voting example with logical semantics, $\tau(n) = O(n
\log n)$. The lower bounds are demonstrated with explicit examples and
analysis. The upper bounds use the special form of Gibbs sampling on
these factor graphs, and then use a standard argument based on %what is called 
\emph{coupling}~\cite{coupling} and an analysis very similar
to a generalized coupon collector process. This argument is sufficient to analyze the voting program in
all three semantics. 
%In the appendix, we include a numerical simulation that
%confirms these results. 

%\paragraph*{Voting Program}

We consider a generalization of the voting program in which each tuple
is a random variable (possibly with a weight); specifically, the program
consists of a variable $Q$, a set of ``Up'' variables
$U$, and a set of ``Down'' variables $D$.  Experimentally, we
compute how different semantics converges on the voting program as
illustrated in Figure~\ref{fig:semantic}.  In this figure, we vary the
size of $|U| + |D|$ with $|U| = |D|$ and all variables to be non-evidence
variables and measure the time that Gibbs sampling converges
to 1\% within the correct marginal probability of $Q$.  We see
that empirically, these semantics do have an effect on Gibbs sampling
performance (Linear converges much slower than either Ratio or Logical)
and we seek to give some insight into this phenomenon for
KBC programs.

We analyze more general \emph{hierarchical} programs in an asymptotic
setting.

\begin{figure}[t]
\scriptsize
\centering
\begin{tabular}{l|r|c|c}
\hline
Problem & OR Semantic & Upper Bound & Lower Bound \\
\hline
\multirow{3}{*}{Voting} & Logical & $O(n \log n)$ & $\Omega(n \log n)$ \\
 & Ratio      & $O(n \log n)$ & $\Omega(n \log n)$ \\
 & Linear   & $2^{O(n)}$ & $2^{\Omega(n)}$ \\
\hline
%% \multirow{3}{*}{Hierarchical} & Logical$^*$ & $O(n \log n)$ & $\Omega(n \log n)$ \\
%%  & Ratio & --- & $\Omega(n \log n)$ \\
%%  & Linear & --- & $2^{\Omega(n)}$ \\
%\hline
\end{tabular}
\caption{Bounds on $\tau(n)$. The $O$ notation hides constants that
  depend the query and the weights.}
\label{tab:convergenceRates}
\end{figure}

\begin{definition}
A rule $q(\bar x) \ts p_1(\bar x_1), \dots, p_k(x_k)$ is
hierarchical if either $\bar x = \emptyset$ or there is a single
variable $x \in \bar x$ such that $x \in \bar x_i$ for $i=1,\dots,
k$. A set of rules (or a Datalog program) is hierarchical if each rule
is hierarchical and they can be stratified.
\end{definition}

Evaluation of hierarchical programs is $\sharp\mathsf{P}$-hard, e.g.,
any Boolean query is hierarchical, but there are $\sharp\mathsf{P}$-hard Boolean
queries~\cite{Suciu:2011:Book}. Using a
similar argument, we show that any set of hierarchical rules that have
non-overlapping bodies converges in time $O(N \log N \log(1 +
\varepsilon^{-1}))$ for either Logical or Ratio semantics, where $N$ is the
number of factors. This statement is not trivial, as we show that Gibbs sampling
on the simplest non-hierarchical programs may take exponential time to converge.
Our definition is more general than the typical notions of {\em
safety}~\cite{Dalvi:2013:JACM}: we consider multiple rules with rich
correlations rather than a single query over a tuple independent database.
However, we only provide guarantees about sampling (not exact evaluation) and
have no dichotomy theorem.

We also analyze an asymptotic setting, %which we include in the appendix, 
in which the domain size grows, and we show that
hierarchical programs also converge in polynomial time in the Logical
semantics. This result is a theoretical example that suggests that the
more data we have, the better Gibbs sampling will behave. The key technical
idea is that for the hierarchical programs no variable has
unbounded influence on the final result, i.e., no variable contributes
an amount depending on $N$ to the final joint probability.

\begin{figure}[t]
\centering
\includegraphics[width=0.3\textwidth]{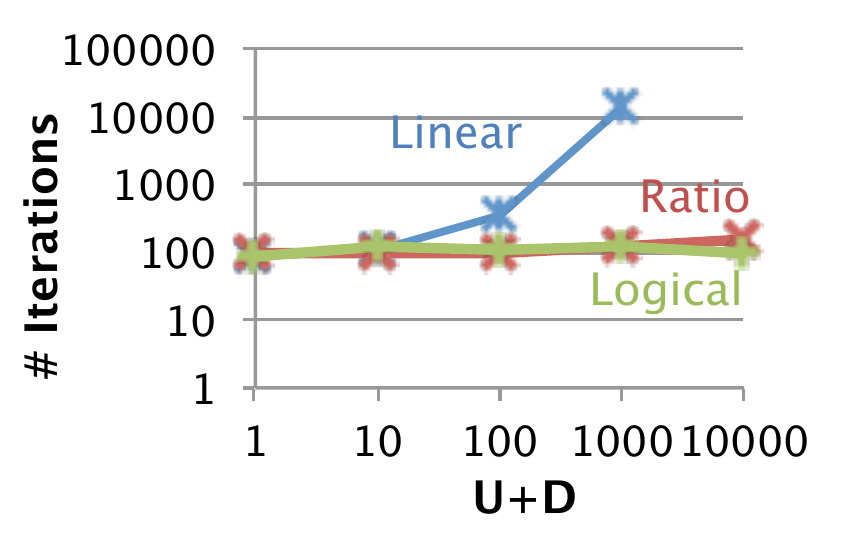}
\caption{Convergence of different semantics.}
\label{fig:semantic}
\end{figure}

\subsection{Proofs of Convergence Rates} \label{app:proof}

Here, we prove the convergence rates stated in Proposition
\ref{propConvergenceRates}.  The strategy for the upper bound proofs
involves constructing a \emph{coupling} between the Gibbs sampler
and another process that attains the equilibrium distribution at each
step.  First, we define a coupling~\cite{coupling,Levin:2006:Book}.
\begin{definition}[Coupling]
A coupling of two random variables $X$ and $X'$ defined on some separate
probability spaces $\mathbb{P}$ and $\mathbb{P}'$ is any new probability
space $\hat{\mathbb{P}}$ over which there are two random variables
$\hat X$ and $\hat X'$ such that $X$ has the same distribution as $\hat X$
and $X'$ has the same distribution as $\hat X'$.
\end{definition}
Given a coupling of two sequences of random variables $X_k$ and
$X_k'$, the \emph{coupling time} is defined as the first time $T$
when $\hat X_k = \hat X_k'$.  The following theorem lets us bound the
total variance distance in terms of the coupling time.
\begin{theorem}
\label{thmCouplingTime}
For any coupling $(\hat X_k, \hat X_k')$ with coupling time $T$,
\[
  \| \mathbb{P}(X_k \in \cdot) - \mathbb{P}'(X_k' \in \cdot) \|_{\mathrm{tv}}
  \le
  2 \hat{\mathbb{P}}(T > k).
\]
\end{theorem}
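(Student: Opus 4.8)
The plan is to establish the standard coupling inequality directly from the definition of total variation distance. First I would fix an arbitrary event $A \subset \Omega$ and exploit the defining property of the coupling: since $\hat X_k$ and $\hat X_k'$ live on the common space $\hat{\mathbb{P}}$ with the same marginal laws as $X_k$ and $X_k'$, I may rewrite $\mathbb{P}(X_k \in A) = \hat{\mathbb{P}}(\hat X_k \in A)$ and $\mathbb{P}'(X_k' \in A) = \hat{\mathbb{P}}(\hat X_k' \in A)$. This moves the entire computation onto $\hat{\mathbb{P}}$ and reduces the goal to bounding $|\hat{\mathbb{P}}(\hat X_k \in A) - \hat{\mathbb{P}}(\hat X_k' \in A)|$ uniformly over $A$.

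Next I would decompose each of the two probabilities by conditioning on whether the two chains have met, that is, on the event $\{\hat X_k = \hat X_k'\}$ and its complement. On $\{\hat X_k = \hat X_k'\}$ the indicators $\mathbb{1}_{\{\hat X_k \in A\}}$ and $\mathbb{1}_{\{\hat X_k' \in A\}}$ coincide, so this event contributes identically to both $\hat{\mathbb{P}}(\hat X_k \in A)$ and $\hat{\mathbb{P}}(\hat X_k' \in A)$ and cancels in the difference. Only the restriction to $\{\hat X_k \ne \hat X_k'\}$ survives, and since each surviving term is nonnegative and at most $\hat{\mathbb{P}}(\hat X_k \ne \hat X_k')$, I obtain $|\hat{\mathbb{P}}(\hat X_k \in A) - \hat{\mathbb{P}}(\hat X_k' \in A)| \le \hat{\mathbb{P}}(\hat X_k \ne \hat X_k')$ for every $A$. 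Taking the supremum over $A$ turns the left-hand side into $\| \mathbb{P}(X_k \in \cdot) - \mathbb{P}'(X_k' \in \cdot) \|_{\mathrm{tv}}$.

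It then remains to relate $\hat{\mathbb{P}}(\hat X_k \ne \hat X_k')$ to the coupling time $T$. For the couplings constructed in the upper-bound proofs the diagonal is absorbing -- once the two chains agree they stay equal thereafter -- so $\{\hat X_k \ne \hat X_k'\}$ coincides exactly with $\{T > k\}$ and the two probabilities are equal. Substituting gives $\| \mathbb{P}(X_k \in \cdot) - \mathbb{P}'(X_k' \in \cdot) \|_{\mathrm{tv}} \le \hat{\mathbb{P}}(T > k)$, which is already stronger than the claim; the stated factor of $2$ therefore holds with room to spare (under the supremum definition of total variation used here the sharp constant is in fact $1$).

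I expect the only delicate point to be this last identification of $\{\hat X_k \ne \hat X_k'\}$ with $\{T > k\}$. It relies on the coupling being absorbing at the diagonal: in general one only has $\{T > k\} \subseteq \{\hat X_k \ne \hat X_k'\}$, so without stickiness $\hat{\mathbb{P}}(\hat X_k \ne \hat X_k')$ could exceed $\hat{\mathbb{P}}(T > k)$ and the bound could fail. It is precisely this structural feature of our construction -- rather than the generous constant $2$ -- that makes the argument go through.
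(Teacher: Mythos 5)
Your proof is correct, but it cannot be compared against a proof in the paper for a simple reason: the paper never proves Theorem~\ref{thmCouplingTime}. It is quoted as a standard fact from the coupling literature (the references attached to the definition of coupling), and then applied to the correlated-flip couplers. What you have written is precisely the standard argument that those references contain: transfer both marginals onto the common space $\hat{\mathbb{P}}$, split each probability on the event $\{\hat X_k = \hat X_k'\}$, note that the agreement part cancels in the difference, bound the remainder by $\hat{\mathbb{P}}(\hat X_k \ne \hat X_k')$, and take the supremum over events. So your proposal supplies the proof the paper outsources, and it does so correctly.

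Your closing caveat is not pedantry; it identifies a genuine looseness in the statement as printed. Read literally, ``for any coupling'' is false, and the factor of $2$ does not rescue it: take $X_k \equiv 0$ for all $k$, and $X_0' = 0$, $X_k' = 1$ for $k \ge 1$. These are deterministic, so their unique coupling has $T = 0$ and hence $\hat{\mathbb{P}}(T > 1) = 0$, yet the laws of $X_1$ and $X_1'$ are at total variation distance $1$. The theorem is true only under the property you isolate, namely that the diagonal is absorbing, so that $\{\hat X_k \ne \hat X_k'\} \subseteq \{T > k\}$ and not merely the reverse inclusion. That property does hold for every coupler actually used in the paper: the correlated-flip coupler runs two copies of the same Gibbs dynamics choosing the same variable at each step, so once $\hat X_k = \hat X_k'$ the two samplers compute identical conditional probabilities $p = p'$, disagree with probability $|p - p'| = 0$, and remain equal forever. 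With that observed, your constant-$1$ bound under the paper's supremum definition of total variation implies the stated bound with room to spare (the $2$ would be needed only under the $\sum_x |\mu(x)-\nu(x)|$ normalization). If anything, your write-up is more careful than the source it reconstructs.
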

All of the coupling examples in this section use a \emph{correlated flip
coupler}, which consists of two Gibbs samplers $\hat X$ and $\hat X'$, each
of which is running with the same random inputs.  Specifically, both
samplers choose to sample the same variable at each timestep.  Then,
if we define $p$ as the probability of sampling $1$ for $\hat X$, and
$p'$ similarly, it assigns both variables to $1$ with probability
$\min(p, p')$, both variables to $0$ with probability $\min(1-p, 1-p')$,
and assigns different values with probability $|p - p'|$.  If we 
initialize $\hat X$ with an arbitrary distribution and $\hat X'$ with
the stationary distribution $\pi$, it is
trivial to check that $\hat X_k$ has distribution $\mathbb{P}_k$, and
$\hat{X}_k'$ always has distribution $\pi$.  Applying Theorem
\ref{thmCouplingTime} results in
\[
\| \mathbb{P}_k - \pi \|_{\mathrm{tv}}
\le
2 \hat{\mathbb{P}}(T > k).
\]
Now, we prove the bounds in Figure \ref{tab:convergenceRates}.

\begin{theorem}[UB for Voting: Logical and Ratio]
  For the voting example, if all the weights on the variables are bounded
  independently of the number of variables $n$, and $|U| = \Omega(n)$, and $|D| = \Omega(n)$,
  then for either the logical or
  ratio projection semantics, there exists a $\tau(n) = O(n \log n)$ such that for any
  $\varepsilon > 0$,
  $\| \mathbb{P}_k - \pi \|_{\mathrm{tv}} \le \varepsilon$
  for any $k \ge \tau(n) \log(\epsilon^{-1})$.
\end{theorem}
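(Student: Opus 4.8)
The plan is to bound the coupling time $T$ of the \emph{correlated flip coupler} and then invoke Theorem~\ref{thmCouplingTime}, which gives $\| \mathbb{P}_k - \pi \|_{\mathrm{tv}} \le 2\,\hat{\mathbb{P}}(T > k)$. It therefore suffices to exhibit $\tau(n) = O(n\log n)$ with $\hat{\mathbb{P}}\bigl(T > \tau(n)\log(1+\varepsilon^{-1})\bigr) \le \varepsilon/2$. First I would make the Gibbs conditionals explicit. Writing $s_v \in \{-1,+1\}$ for the value of variable $v$ and $|U_+|, |D_+|$ for the number of true Up/Down variables, Equation~\ref{eq:ruleweight} gives total weight $\mathsf{W} = w_Q s_Q + \sum_v w_v s_v + s_Q\bigl(g(|U_+|) - g(|D_+|)\bigr)$, so the conditional of an Up variable $U_i$ is $\sigma\bigl(2 w_{U_i} + s_Q \Delta_i\bigr)$, where $\sigma$ is the logistic function and $\Delta_i = g(c_{-i}+1) - g(c_{-i})$ is the marginal increment of $g$ evaluated at the count $c_{-i}$ of the \emph{other} true Ups (and symmetrically for Down and for $Q$). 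The crucial observation is that every conditional depends on the rest of the state only through $s_Q$ and the two counts, and the influence of flipping any single variable is controlled by the increment $\Delta g(c) = g(c+1) - g(c)$.

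The second step is to fix a good event $E$ on which both chains keep $|U_+| \ge \delta n$ and $|D_+| \ge \delta n$ for all $k$ steps. Since the weights are bounded, each Up variable is resampled to $1$ with probability at least a constant $p_0 > 0$, and because $|U| = \Omega(n)$ the count concentrates at $\Theta(n)$; controlling the count along the trajectory gives $\hat{\mathbb{P}}(E^c) \le \mathrm{poly}(n)\,e^{-\Omega(n)}$, which is absorbed into $\varepsilon$ in the relevant regime. On $E$ the increments are tame. For the Logical semantics of Figure~\ref{fig:semantics}, $\Delta g(c) = \mathbb{1}_{\{c=0\}} = 0$ whenever $c \ge 1$, so on $E$ \emph{every} conditional collapses to $\sigma(2 w_v)$, completely independent of the rest of the state (in particular $Q$ sees $g(|U_+|)-g(|D_+|) = 1-1 = 0$). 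The coupler then makes a variable agree forever the first time it is selected, and coupling reduces exactly to a coupon collector over the $n$ variables; the standard tail bound $\hat{\mathbb{P}}(T > n\ln n + tn) \le e^{-t}$ yields the claim with $\tau(n) = O(n\log n)$.

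For the Ratio semantics the same good event gives $\Delta g(c) = \log\tfrac{c+2}{c+1} = O(1/n)$ on $E$, so the conditional of any variable shifts by only $O(1/n)$ when a single other variable flips; this is the quantitative form of the ``no variable has unbounded influence'' principle. I would then run a path-coupling argument on a \emph{weighted} Hamming distance that assigns a large weight to a $Q$-disagreement and a small weight to each Up/Down disagreement. A single $Q$-disagreement is removed at rate $\Theta(1/n)$ whenever $Q$ is reselected and only spawns leaf disagreements each with probability $O(1/n)$; a single leaf disagreement is likewise removed at rate $\Theta(1/n)$ and feeds back only $O(1/n)$ influence onto $Q$ and the other leaves. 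With the metric weights tuned using $|U|,|D| = \Omega(n)$, this gives an expected per-step contraction of the distance by $1 - \Omega(1/n)$, hence coupling time $O(n\log n)$ and the same tail bound.

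The hard part will be the Ratio case. Unlike Logical, the conditionals are never exactly state-independent, and the naive unweighted Hamming path coupling is only borderline: the \emph{aggregate} influence of the $\Omega(n)$ leaf variables on $Q$ is $\Theta(1)$ rather than $o(1)$, so summing single-variable influences does not by itself beat the removal rate. Establishing strict contraction therefore hinges on the weighted metric and on the concentration that confines both chains to the large-count region where $\Delta g = O(1/n)$, which is precisely where the hypotheses $|U| = \Omega(n)$, $|D| = \Omega(n)$ and bounded weights enter. The remaining pieces — the coupon-collector tail and the conversion through Theorem~\ref{thmCouplingTime} — are routine.
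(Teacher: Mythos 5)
Your overall architecture coincides with the paper's: the correlated-flip coupler, Theorem~\ref{thmCouplingTime}, a good event keeping both counts $\Omega(n)$, and coupon-collector timing. For the \emph{Logical} semantics your argument is essentially the paper's, in fact a cleaner version of it: you exploit that $\Delta g \equiv 0$ once the counts are positive, so all conditionals become exactly state-independent on the good event and coupling is pure coupon collection, whereas the paper treats Logical and Ratio uniformly through the weaker window bound $f(m+1)-f(m) \le 2/(p_{\mathrm{min}}|U|)$. Where you genuinely diverge is \emph{Ratio}: the paper never uses path coupling or contraction. It runs two phases of $2n\log n$ steps each; conditioned on every variable being resampled in the second phase and on the counts staying large (events $E_1$, $E_2$, $E_U$, $E_D$), each leaf agrees at its last resampling with probability $1-O(1/n)$ and $Q$ agrees with probability $1-q_+ + q_- = \Omega(1)$, so the chains fully couple within one epoch with probability $\Omega(1)\,(1-O(1/n))^{n-1} = \Omega(1)$, and $L$ independent epochs amplify this to $1-(1-p_C)^L$. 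The asymmetry you worry about --- the $\Theta(1)$ aggregate influence of the leaves on $Q$ --- is handled there not by a weighted metric but by simply letting the single variable $Q$ couple with only constant probability; since there is exactly one $Q$, a constant per-epoch success probability suffices.

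Your path-coupling route for Ratio is plausible but has two unrepaired holes. First, Bubley--Dyer contraction must hold for \emph{all} adjacent pairs, while your increment bounds hold only in the large-count region; you must either restrict the metric and the chain to that region (checking that geodesic paths between good states can be kept good and that the chains leave the region with per-step probability $e^{-\Omega(n)}$) or treat bad states separately --- precisely the complication the paper's epoch argument sidesteps. Second, your stated $O(1/n)$ leaf-to-leaf influence is not sufficient: in the leaf-disagreement case it contributes, summed over the other $n-1$ leaves, an expected increase of order $\beta/n$ against a removal rate of $\beta/(n+1)$, so contraction becomes constant-dependent and borderline; you need the second-difference bound $|\Delta g(c+1)-\Delta g(c)| = O(1/n^2)$ (which does hold for Ratio on the good event) to make the case close comfortably. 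Finally, two smaller points affecting both semantics: your good event ``for all $k$ steps'' cannot hold from time $0$ under adversarial initialization (the arbitrarily-started chain may begin with zero counts), so you need a burn-in phase like the paper's event $E_1$ before the concentration applies; and since the theorem quantifies over all $\varepsilon>0$, a single trajectory-wide good event with failure probability $k\,e^{-\Omega(n)}$ cannot be ``absorbed into $\varepsilon$'' when $\varepsilon$ is exponentially small in $n$ --- you need the paper's device of repeating the coupling argument over independent epochs so the failure probability decays geometrically in the number of epochs, independently of $n$.
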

\begin{proof}
  For the voting problem, if we let $f$ denote the projection semantic we are using,
  then for any possible world $I$ the weight can be written as
  \[
    W
    =
    \sum_{u \in U \cap I} w_u + \sum_{d \in D \cap I} w_d
    + w \sigma f(|U \cap I|) - w \sigma f(|D \cap I|),
  \]
  where $\sigma = \mathsf{sign}(Q, I)$.  
  Let $I$ denote the world at the current timestep, $I'$ be the world at the next timestep,
  and $I_-$ be $I$ with the variable we choose to sample removed.
  Then if we sample a variable $u \in U$, and let $m_U = |U \cap I_-|$, then
  \begin{align*}
    \mathbb{P}\left(u \in I'\right)
    &=
    \frac{\exp\left(w_u + w \sigma f(m_U + 1)\right)}
      {\exp\left(w_u + w \sigma f(m_U + 1)\right) + \exp\left(w \sigma f(m_U)\right)} \\
    &=
    \left(1 + \exp\left(- w_u - w \sigma (f(m_U + 1) - f(m_U))\right)\right)^{-1}.
  \end{align*}
  Since $f(x+1) - f(x) < 1$ for any $x$, it follows that $\sigma (f(m_U+1) - f(m_U)) \ge -1$.
  Furthermore, since $w_u$ is bounded independently of $n$, we know
  that $w_u \ge w_{\mathrm{min}}$ for some constant $w_{\mathrm{min}}$.
  Substituting this results in
  \[
    \mathbb{P}\left(u \in I'\right)
    \ge
    \frac{1}{1 + \exp\left(- w_{\mathrm{min}} + w\right)}
    =
    p_{\mathrm{min}},
  \]
  for some constant $p_{\mathrm{min}}$ independent of $n$.  The same
  argument will show that, if we sample $d \in D$,
  \[
    \mathbb{P}\left(d \in I'\right)
    \ge
    p_{\mathrm{min}}.
  \]
  Now, if $|U \cap I| > \frac{p_{\mathrm{min}} |U|}{2}$, it follows that, for
  both the logical and ratio semantics,
  $f(m_U+1) - f(m_U) \le \frac{2}{p_{\mathrm{min}} |U|}$.
  Under these conditions, then if we sample $u$,
  we will have
  \begin{align*}
    & \frac{1}{1 + \exp\left(- w_u + w \frac{2}{p_{\mathrm{min}} |U|}\right)} \\
    &\hspace{2em}\le
    \mathbb{P}\left(u \in I'\right) \\
    &\hspace{2em}\le
    \frac{1}{1 + \exp\left(- w_u - w \frac{2}{p_{\mathrm{min}} |U|}\right)},
  \end{align*}
  The same argument applies to sampling $d$.  

  Next, consider the situation if we sample $Q$.
  \[
    \mathbb{P}\left(Q \in I'\right)
    =
    \left(
      1
      +
      \exp\left(-2w f(|U \cap I|) + 2w f(|D \cap I|)\right)
    \right)^{-1}.
  \]
  Under the condition that $|U \cap I| > \frac{p_{\mathrm{min}} |U|}{2}$, and similarly for 
  $D$, this is bounded by
  \begin{align*}
    &\left(1 + \exp\left(-2w f\left(\frac{p_{\mathrm{min}} |U|}{2}\right) + 2w f(|D|)\right)\right)^{-1} \\
    &\hspace{2em}\le
    \mathbb{P}\left(Q \in I'\right) \\
    &\hspace{2em}\le
    \left(1 + \exp\left(2w f\left(\frac{p_{\mathrm{min}} |D|}{2}\right) - 2w f(|U|)\right)\right)^{-1}.
  \end{align*}
  But, for both logical and ratio semantics, we have that
  $f(cn) - f(n) = O(1)$, and so for some $q_+$ and $q_-$ independent of $n$,
  \[
    q_- \le \mathbb{P}\left(Q \in I'\right) \le q_+.
  \]

  Now, consider a correlated-flip coupler running for $4 n \log n$ steps on the
  Gibbs sampler.  Let $E_1$ be the event that, after the first $2 n \log n$
  steps, each of the variables has been sampled at least once.  (This will have
  probability at least $\frac{1}{2}$ by the coupon collector's problem.)
  Let $E_U$ be the event that, after the first $2 n \log n$ steps,
  $|U \cap I| \ge \frac{p_\mathrm{min} |U|}{2}$ for the next $2 n \log n$ steps
  for both samplers,
  and similarly for $E_D$.  After all
  the entries have been sampled, $|U \cap I|$ and $|D \cap I|$ will each be
  bounded from below by a binomial random variable with parameter $p_\mathrm{min}$
  at each timestep, so from Hoeffding's inequality, the probability that
  this constraint will
  be violated by a sampler at any particular step is less than
  $\exp(-\frac{1}{2} p_{\mathrm{min}} |U|)$.  Therefore,
  \[
    \hat{\mathbb{P}}(E_U|E_1)
    \ge
    \left(1 - \exp\left(-\frac{1}{2} p_{\mathrm{min}} |U| \right) \right)^{4 n \log n}
    =
    \Omega(1).
  \]
  The same argument shows that $E_D = \Omega(1)$.
  Let $E_2$ be the event that all the variables are resampled at least
  once between time $2 n \log n$ and $4 n \log n$.  Again this event has 
  probability at least $\frac{1}{2}$.  Finally, let $C$ be the event that
  coupling occurs at time $4 n \log n$.  Given $E_1$, $E_2$, $E_U$, and $E_D$,
  this probability is equal to the probability that each variable
  coupled individually the last time it was sampled.  For $u$,
  this probability is
  \[
    1 - \frac{1}{1 + \exp\left(- w_u - w \frac{2}{p_{\mathrm{min}} n}\right)}
    + \frac{1}{1 + \exp\left(- w_u + w \frac{2}{p_{\mathrm{min}} n}\right)},
  \]
  and similarly for $d$.
  We know this probability is $1 - O(\frac{1}{n})$ by Taylor's theorem.  Meanwhile,
  for $Q$, the probability of coupling is $1 - q_+ + q_-$, which is always
  $\Omega(1)$.  Therefore,
  \[
    \hat{\mathbb{P}}(C|E_1,E_2,E_U,E_D)
    \ge
    \Omega(1) \left(1 - O\left(\frac{1}{n}\right) \right)^{n-1}
    =
    \Omega(1),
  \]
  and since
  \[
    \hat{\mathbb{P}}(C)
    =
    \hat{\mathbb{P}}(C|E_1,E_2,E_U,E_D) 
    \hat{\mathbb{P}}(E_U|E_1)
    \hat{\mathbb{P}}(E_D|E_1)
    \hat{\mathbb{P}}(E_2|E_1)
    \hat{\mathbb{P}}(E_1),
  \]
  we can conclude that $\hat{\mathbb{P}}(C) = \Omega(1)$.

  Therefore, this process couples with at least some constant probability $p_C$
  after $4 n \log n$ steps.  Since we can run this coupling argument
  independently an arbitrary number of times, it follows that, after
  $4 L n \log n$ steps, the probability of coupling will be at least
  $1 - (1 - p_C)^L$.  Therefore,
  \[
    \| \mathbb{P}_{4 L n \log n} - \pi \|_{\mathrm{tv}}
    \le
    \hat{\mathbb{P}}(T > 4 L n \log n) \le (1 - p_C)^L.
  \]
  This will be less than $\epsilon$ when
  $L \ge \frac{\log(\epsilon)}{\log(1 - p_C)}$, which occurs when
  $k \ge 4 n \log n \frac{\log(\epsilon)}{\log(1 - p_C)}$.
  Letting $\tau(n) = \frac{4 n \log n}{-\log(1 - p_C)}$ proves the
  theorem.
\end{proof}

\begin{theorem}[LB for Voting: Logical and Ratio]
  For the voting example, using either the logical or ratio projection
  semantics, the minimum time to achieve total variation distance $\varepsilon$
  is $O(n \log n)$.
\end{theorem}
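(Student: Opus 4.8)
The plan is to exhibit an explicit voting instance on which the random-scan Gibbs sampler cannot reach total variation distance $\varepsilon$ in fewer than $\Omega(n\log n)$ steps, matching the upper bound of the previous theorem. (Since this is the lower-bound statement, the exponent recorded in Figure~\ref{tab:convergenceRates} is $\Omega(n\log n)$.) I would take $|U|=|D|=\Theta(n)$, declare all variables non-evidence, and use weights bounded independently of $n$. The per-update estimates already derived in the upper-bound proof guarantee that, for either the logical or ratio semantics, every single-site conditional $\mathbb{P}(u\in I')$ lies in a constant interval $[p_{\min},1-p_{\min}]$; in particular each Up-variable's stationary marginal equals some constant $\theta\in(0,1)$. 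I would start the chain from the configuration in which every $U$-variable is $0$, so the $U$-block begins far from equilibrium, and track the statistic $S(I)=|U\cap I|$.

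The core of the argument is a two-sided separation of the law of $S$ under $\pi$ and under $\mathbb{P}_k$. First I would show $S$ is essentially binomial under $\pi$: conditioned on the single variable $Q$, the logical factor $f(|U\cap I|)=\mathbb{1}_{\{|U\cap I|>0\}}$ is constant whenever at least one Up is active (which fails only with probability $O(2^{-\Omega(n)})$ because each Up is on with constant probability), so the $U$-variables are exactly independent Bernoulli$(\theta)$; for the ratio semantics the increments $f(m+1)-f(m)=O(1/m)$ make the residual coupling vanishingly weak. Hence $\mathbb{E}_\pi[S]=\theta|U|=\Theta(n)$ and $\operatorname{Var}_\pi(S)=O(n)$. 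Second, I would exploit that a single-site update of a (conditionally) independent coordinate resets it to its equilibrium marginal $\theta$ in one step: every $U$-variable that has been resampled by time $k$ already carries the stationary marginal, whereas every variable never selected is still deterministically $0$. Consequently the entire mean deficit is isolated to the never-resampled coordinates, $\mathbb{E}_\pi[S]-\mathbb{E}[S(X_k)]=\theta\cdot\mathbb{E}[\,\#\{\text{never-resampled }U\}\,]$, which is exactly a coupon-collector quantity.

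To extract the $\log n$ factor I would use the coupon-collector estimate: at time $k=c\,n\log n$ the expected number of $U$-variables never sampled is $\Theta(n)(1-1/n)^{k}=\Theta(n^{1-c})$, which for $c<\tfrac12$ is $\omega(\sqrt n)$ and is concentrated by a Chernoff bound. Thus for such $k$ the two means differ by $\omega(\sqrt n)$, i.e.\ by many standard deviations of both laws, each of which has standard deviation $O(\sqrt n)$. A threshold test on $S$, namely the event $\{S\le \mathbb{E}_\pi[S]-C\sqrt n\}$, then has probability near $1$ under $\mathbb{P}_k$ and near $0$ under $\pi$, giving $\|\mathbb{P}_k-\pi\|_{\mathrm{tv}}\ge\varepsilon$. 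Since this persists up to $k=\Theta(n\log n)$, we conclude $\tau(n)=\Omega(n\log n)$, tight against the matching $O(n\log n)$ upper bound.

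The main obstacle is the weak-dependence control in the second step: I must rigorously justify reducing the count-coupled law of the $U$-block to the independent (binomial) model, bounding the total correlation that the global term $f(|U\cap I|)$ induces under the ratio semantics and controlling the rare event that no Up is active. Establishing the $O(1/m)$ decay of the increments of $f$ and turning it into a genuine bounded-correlation statement is precisely what holds for logical and ratio semantics and fails for linear (where the sharply peaked distribution instead forces the $2^{\Omega(n)}$ lower bound); once this is in place, the coupon-collector bookkeeping that supplies the $\log n$ factor is routine.
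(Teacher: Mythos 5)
Your proposal is correct in substance and rests on the same core idea as the paper's proof---the coupon-collector bound---but it does substantially more work, and that work is precisely what the paper elides. The paper's entire argument is two sentences: to converge, every variable must be sampled at least once, and by coupon collecting this takes $\Omega(n \log n)$ steps. Taken literally this has a gap: ``some variable was never resampled'' is a property of the sampler's trajectory, not of its state, so one still needs a state-defined event whose probabilities under $\mathbb{P}_k$ and $\pi$ differ by at least $\varepsilon$. Your distinguishing-statistic argument---tracking $S = |U \cap I|$ from the all-zeros start, the mean separation $\Theta(n^{1-c}) = \omega(\sqrt{n})$ at $k = c\,n\log n$ for $c < \tfrac12$, and $O(\sqrt{n})$ concentration of $S$ under both laws---is the standard way to close exactly that gap, and your coupon-collector bookkeeping is right. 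Two caveats. First, your claim that a resampled coordinate ``carries the stationary marginal'' is not exact: a Gibbs update draws from the conditional law given the \emph{current} state, which depends on $\mathsf{sign}(Q,I)$ and on $|U \cap I_-|$. For the logical semantics the conditional equals $(1+e^{-w_u})^{-1}$ (hence the near-stationary marginal) as soon as some \emph{other} Up variable is active, and for the ratio semantics it is off by $O(1/|U\cap I|) = O(1/n)$ once $|U \cap I| = \Theta(n)$, so the accumulated bias is $O(1)$ and harmless---but establishing that $|U\cap I|$ stays $\Theta(n)$ requires reusing the Hoeffding/binomial-domination events ($E_U$, $E_D$) from the paper's upper-bound proof, which you should say explicitly. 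Second, your extra machinery buys rigor but narrows scope: the paper's one-liner applies verbatim under any semantics (for linear it is simply dominated by the $2^{\Omega(n)}$ bound), whereas your route genuinely uses the logical/ratio structure to get near-independence of the $U$-block under $\pi$; that is not a defect---it is where the semantics honestly enter---but it is a restriction the cruder argument does not have.
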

\begin{proof}
  At a minimum, in order to converge, we must sample all the variables.  From
  the coupon collector's problem, this requires $O(n \log n)$ time.
\end{proof}

\begin{theorem}[UB for Voting: Linear]
  For the voting example, if all the weights on the variables are bounded
  independently of $n$, then for linear projection
  semantics, there exists a $\tau(n) = 2^{O(n)}$ such that for any
  $\varepsilon > 0$,
  $\| \mathbb{P}_k - \pi \|_{\mathrm{tv}} \le \varepsilon$
  for any $k \ge \tau(n) \log(\epsilon^{-1})$.
\end{theorem}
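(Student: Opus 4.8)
The plan is to reuse the correlated-flip coupler and the window-based coalescence argument from the proof of the logical/ratio upper bound, and to isolate the single place where linear semantics behaves differently. The decisive structural observation is that under linear semantics $f(m+1)-f(m)=1$ for every $m$, so the single-site Gibbs update for any $u\in U$ collapses to
\[
  \mathbb{P}(u\in I')=\bigl(1+\exp(-w_u-w\sigma)\bigr)^{-1},\qquad \sigma=\mathsf{sign}(Q,I),
\]
and likewise for each $d\in D$. This probability depends on the rest of the world \emph{only} through the value of $Q$, and since all of $w_u,w_d,w$ are $O(1)$ it lies in $[\Omega(1),1-\Omega(1)]$. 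Consequently, once the two coupled chains agree on $Q$, every later resampling of a $U$- or $D$-variable uses identical conditional probabilities in the two chains, so under the correlated-flip coupler that variable coalesces with probability one.

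First I would record the one quantity that degrades relative to the polynomial case, namely the update for $Q$ itself,
\[
  \mathbb{P}(Q\in I')=\bigl(1+\exp(-2w(|U\cap I|-|D\cap I|))\bigr)^{-1}.
\]
Because the count difference $|U\cap I|-|D\cap I|$ ranges over $[-O(n),O(n)]$ while $w=O(1)$, this probability lies only in $[2^{-O(n)},\,1-2^{-O(n)}]$, so whenever $Q$ is resampled the two chains assign it the same value with probability $1-|p_Q-p_Q'|\ge 2^{-O(n)}$. This is precisely the step where the bound $f(cn)-f(n)=O(1)$ used in the logical/ratio analysis fails, since here $f(cn)-f(n)=\Theta(n)$; it is the sole source of the exponential blowup.

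Next I would build a single favorable window, of length $n+1$, on which the chains coalesce with probability $\ge 2^{-O(n)}$. Consider the event that in these $n+1$ steps every variable is sampled exactly once (a uniformly random ordering), which by Stirling's approximation has probability $\tfrac{(n+1)!}{(n+1)^{n+1}}=2^{-O(n)}$; that $Q$ is the first variable sampled, which given the ordering has conditional probability $\tfrac{1}{n+1}$; and that $Q$ coalesces at that first step, which has probability $\ge 2^{-O(n)}$. On this event $Q$ is coupled after step one and is never resampled again in the window, so it stays coupled, and since each remaining $U$- and $D$-variable is then sampled exactly once while $Q$ agrees in both chains, each coalesces with probability one. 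Multiplying the three factors yields a per-window coalescence probability of $2^{-O(n)}$, and the single-pass structure sidesteps the interdependence between $Q$ and the vote variables, because $Q$ is touched only once.

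Finally I would iterate over independent windows. Running $L$ of them, Theorem~\ref{thmCouplingTime} together with the correlated-flip coupler gives
\[
  \|\mathbb{P}_{(n+1)L}-\pi\|_{\mathrm{tv}}\le 2\,\hat{\mathbb{P}}(T>(n+1)L)\le(1-2^{-O(n)})^{L},
\]
which falls below $\varepsilon$ once $L\ge 2^{O(n)}\log(\varepsilon^{-1})$; since each window has length $n+1$, this reads $k\ge\tau(n)\log(\varepsilon^{-1})$ with $\tau(n)=(n+1)\,2^{O(n)}=2^{O(n)}$, as claimed. The main obstacle I anticipate is not any single estimate but the clean isolation of the exponential factor: one must argue that the only coalescence probability that drops from $\Omega(1)$ (as in the logical and ratio cases) to $2^{-O(n)}$ is that of $Q$, and that conditioning on $Q$ makes the vote variables independent and instantly couplable. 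The short single-pass window is what keeps this honest, since forcing $Q$ to be resampled exactly once removes any need to bound how long $Q$ can be held coupled while the $U$- and $D$-variables catch up.
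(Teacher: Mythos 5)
Your proof is correct, and it takes a genuinely different route from the paper's. The paper also uses the correlated-flip coupler and Theorem~\ref{thmCouplingTime}, but it certifies coalescence by forcing both chains into a single reference configuration: over a window of $2n\log n$ steps it requires that every variable be sampled (coupon collector, probability $\Omega(1)$) and that every variable be set to $1$ in both chains, paying $\exp(-O(n))$ twice for $Q$ and a constant per vote variable, which accumulates to $(\Omega(1))^{2O(n)} = \exp(-O(n))$ across the $O(n)$ vote variables; this yields $p_C \ge \exp(-O(n))$ per window and hence $\tau(n)=2^{O(n)}$. You instead exploit a structural feature of linear semantics that the paper never uses: since $f(m+1)-f(m)=1$ identically, the conditional law of each vote variable depends on the configuration only through $\mathsf{sign}(Q,I)$, so once the chains agree on $Q$, every subsequently resampled vote variable coalesces with probability one under the correlated-flip coupler. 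This buys a sharper diagnosis of where linear semantics hurts (only the update for $Q$ degenerates to probabilities in $[2^{-O(n)},1-2^{-O(n)}]$, exactly because $f(cn)-f(n)=\Theta(n)$ rather than $O(1)$), and it avoids paying anything per vote variable. One caveat: your claim that $Q$ is the \emph{sole} source of exponential blowup is slightly overstated within your own construction, because demanding that the $n+1$ steps form a single pass with $Q$ first costs $\frac{(n+1)!}{(n+1)^{n+1}}\cdot\frac{1}{n+1}=e^{-\Theta(n)}$ by Stirling, a second, independent exponential factor; a variant using the paper's longer $2n\log n$ window, requiring only that the last resampling of $Q$ precede the last resampling of each vote variable (probability about $\frac{1}{n+1}$ by symmetry) and that $Q$ couple there, would isolate the exponential cost entirely in $Q$. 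None of this affects validity: both arguments give $p_C \ge 2^{-O(n)}$ per window and $\tau(n)=2^{O(n)}$, which is all the theorem (and, given the matching lower bound, all that is possible) requires.
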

\begin{proof}
  If at some timestep we choose to sample variable $u$,
  \[
    \mathbb{P}\left(u \in I'\right)
    =
    \frac{\exp\left(w_u + w q\right)}{\exp\left(w_u + w q\right) + \exp(0)}
    =
    \Omega(1),
  \]
  where $q=\mathrm{sign}(Q,I)$. 
  The same is true for sampling $d$.  If we choose to sample $Q$,
  \begin{align*}
    &\mathbb{P}\left(Q \in I'\right) \\
    &\hspace{1em}=
    \frac{
      \exp\left(w |U \cap I| - w |D \cap I| \right)
    }{
      \exp\left(w |U \cap I| - w |D \cap I|\right)
      +
      \exp\left(-w |U \cap I| + w |D \cap I|\right)
    } \\
    &\hspace{1em}=
    \left(
      1
      +
      \exp\left(-2w (|U \cap I| - |D \cap I|) \right)
    \right)^{-1}
    =
    \exp(-O(n)).
  \end{align*}
  Therefore, if we run a correlated-flip coupler for $2 n \log n$ timesteps,
  the probability that coupling will have occured is greater the probability
  that all variables have been sampled (which is $\Omega(1)$) times
  the probability that all variables were sampled as $1$ in both samplers.
  Thus if $p_C$ is the probability of coupling,
  \[
    p_C 
    \ge
    \Omega(1) \left(\exp(-O(n))\right)^2 \left(\Omega(1)\right)^{2 O(n)}
    =
    \exp(-O(n)).
  \]
  Therefore, if we run for some $\tau(n) = 2^{O(n)}$ timesteps, coupling
  will have occurred at least once with high probability.  The statements
  in terms of total variance distance follow via the same argument used in
  the previous lower bound proof. 
\end{proof}

\begin{theorem}[LB for Voting: Linear]
  For the voting example using linear projection
  semantics, the minimum time to achieve total variation distance $\varepsilon$
  is $2^{O(n)}$.
\end{theorem}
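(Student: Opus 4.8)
The plan is to prove the lower bound by exhibiting an explicit voting instance whose stationary distribution $\pi$ has two well-separated modes, and then to invoke a bottleneck (conductance) argument. Take $n = 2m+1$ variables: the query variable $Q$, a set $U$ of $m$ ``up'' variables, and a set $D$ of $m$ ``down'' variables, with all of $U \cup D$ non-evidence, rule weight $w = 1$, and no per-variable weights. Under linear semantics, Equation~\ref{eq:ruleweight} gives the world weight $W(I) = \sigma\,(k - l)$, where $\sigma = \mathsf{sign}(Q,I)$, $k = |U \cap I|$, and $l = |D \cap I|$. Thus when $Q$ is true the distribution rewards many up-votes and when $Q$ is false it rewards many down-votes, which is what creates the two wells.

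First I would fix the bottleneck set $S = \{I : Q \in I\}$. By the involution that simultaneously flips $Q$ and swaps the paired up/down variables (legal since $|U| = |D|$), $W$ is preserved, so $\pi(S) = \tfrac12$. Conditioned on $S$, the distribution $\pi(\cdot \mid S)$ factorizes: each up-variable is present independently with probability $e^{w}/(1+e^{w})$ and each down-variable with probability $1/(1+e^{w})$. Hence the vote gap $t := k - l$ has mean $m\tanh(w/2) = \Theta(n)$, and since it is a sum of independent bounded terms, Hoeffding's inequality gives $\mathbb{P}_{\pi(\cdot\mid S)}[\,t \le \tfrac12 m\tanh(w/2)\,] \le e^{-\Omega(n)}$.

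Next I would estimate the edge flow out of $S$. Under random-scan Gibbs the only moves that leave $S$ resample $Q$ and set it false, which from a world $I \in S$ has probability $\tfrac{1}{n}(1 + e^{2w\,t(I)})^{-1}$, so
\[
Q(S,S^{c}) = \frac{1}{n}\sum_{I \in S}\pi(I)\,\bigl(1 + e^{2w\,t(I)}\bigr)^{-1}.
\]
I would bound this by splitting at $t_0 = \tfrac12 m\tanh(w/2)$: on $\{t \ge t_0\}$ the factor is at most $e^{-2w t_0} = e^{-\Omega(n)}$, while the complementary mass is $e^{-\Omega(n)}$ by the concentration bound above. Hence $Q(S,S^{c}) = e^{-\Omega(n)}$ and the bottleneck ratio is $\Phi(S) = Q(S,S^{c})/\pi(S) = e^{-\Omega(n)}$. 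Plugging this into the standard bottleneck-ratio lower bound, which gives $t_{\mathrm{mix}}(\varepsilon) = \Omega(1/\Phi(S))$ for any fixed $\varepsilon < \tfrac12$ (see~\cite{Levin:2006:Book}), yields a mixing time of $2^{\Omega(n)}$, matching the $2^{O(n)}$ upper bound.

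The hard part will be the edge-flow estimate. The naive bound $(1 + e^{2wt})^{-1} \le e^{-2wt}$ is useless in expectation, because under $\pi(\cdot\mid S)$ it is dominated by the rare worlds with small or negative $t$, where the bound blows up even though the true flip probability is $\Theta(1)$; indeed one checks that $\mathbb{E}_{\pi(\cdot\mid S)}[e^{-2wt}] = 1$ exactly. The fix is precisely the split above: one must separate the \emph{typical} regime, where $t = \Theta(n)$ forces an exponentially small flip probability, from the \emph{atypical} regime, whose $\pi(\cdot\mid S)$-mass is exponentially small by the concentration of the vote gap. Establishing that this two-well bottleneck is genuinely the obstruction---that no faster escape route exists---is what makes the exponential lower bound tight.
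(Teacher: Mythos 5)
Your proof is correct, but it takes a genuinely different route from the paper's. The paper argues \emph{dynamically}: it starts the sampler at the explicit state $I = \{Q\} \cup U$, shows that while $Q \in I$ the counts $|U \cap I|$ and $|D \cap I|$ are stochastically dominated by binomial random variables concentrated near $(1-p)n$ and $pn$, so the per-step probability of flipping $Q$ out of the world is $\exp(-\Omega(n))$, and then union-bounds over $k$ steps to conclude $\| \mathbb{P}_k - \pi \|_{\mathrm{tv}} \ge \tfrac12 - k\exp(-\Omega(n))$, using the symmetry $\pi(Q \in I) = \tfrac12$. You instead argue \emph{statically}: you compute the stationary ergodic flow across the cut $S = \{I : Q \in I\}$ and invoke the standard bottleneck-ratio (conductance) lower bound on mixing time. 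The two proofs share the same two-well picture, the same symmetry argument giving $\pi(S) = \tfrac12$, and the same Hoeffding-type concentration of the vote gap; the difference is where the concentration is applied. Your version is cleaner on exactly that point: under $\pi(\cdot \mid S)$ the variables are genuinely independent, so the concentration statement is immediate, whereas the paper must control the law of the counts along the trajectory of the chain (its claim that the counts are "bounded by binomial random variables" at each timestep is left informal and implicitly assumes every variable has been resampled). What the paper's argument buys in exchange is a concrete bad starting state and independence from conductance machinery. Your observation that the naive bound fails because $\mathbb{E}_{\pi(\cdot \mid S)}\left[e^{-2wt}\right] = 1$ exactly is a nice touch — it is the same obstruction that forces the paper to use Hoeffding rather than a direct expectation bound — and your split at $t_0$ resolves it correctly. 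Both arguments deliver a $2^{\Omega(n)}$ lower bound for any fixed $\varepsilon < \tfrac12$.
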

\begin{proof}
  Consider a sampler with unary weights all zero
  that starts in the state $I = {Q} \cup {U}$. 
  We will show that it takes an exponential amount of time until $Q \notin I$.
  From above, the probability of flipping $Q$ will be
  \[
    \mathbb{P}\left(Q \notin I'\right)
    =
    \left(
      1
      +
      \exp\left(2w (|U \cap I| - |D \cap I|) \right)
    \right)^{-1}.
  \]
  Meanwhile, the probability to flip any $u$ while $Q \in I$ is
  \[
    \mathbb{P}\left(u \notin I'\right)
    =
    \frac{\exp\left(0\right)}{\exp\left(w\right) + \exp(0)}
    =
    \left(\exp\left(w\right) + 1\right)^{-1}
    =
    p,
  \]
  for some $p$, and the probability of flipping $d$ is similarly
  \[
    \mathbb{P}\left(d \in I'\right)
    =
    \frac{\exp\left(-w\right)}{\exp\left(-w\right) + \exp(0)}
    =
    p.
  \]
  Now, consider the following events which could happen at any timestep.
  While $Q \in I$,
  let $E_U$ be the event that $|U \cap I| \le (1 - 2p) n$, and let $E_D$ be
  the event that $|D \cap I| \ge 2pn$.  Since $|U \cap I|$ and $|D \cap I|$
  are both bounded
  by binomial random variables with parameters $1 - p$ and $p$ respectively,
  Hoeffding's inequality states that, at any timestep,
  \[
    \mathbb{P}\left(|U \cap I| \le (1 - 2p) n\right) \le \exp(-2 p^2 n),
  \]
  and similarly
  \[
    \mathbb{P}\left(|D \cap I| \ge 2p n\right) \le \exp(-2 p^2 n).
  \]
  Now, while these bounds are satisfied, let $E_Q$ be the event that $Q \notin I'$.
  This will be bounded by
  \[
    \mathbb{P}\left(E_Q\right)
    =
    \left(
      1
      +
      \exp\left(2w (1 - 4p) n \right)
    \right)^{-1}
    \le
    \exp(-2w (1 - 4p) n).
  \]
  It follows that at any timestep,
  $\mathbb{P}\left(E_U \vee E_D \vee E_Q\right) = \exp(-\Omega(n))$.
  So, at any timestep $k$, 
  $\mathbb{P}\left(Q \in I\right) = k \exp(-\Omega(n))$.  However, by symmetry
  the stationary distribution $\pi$ must have
  $\pi\left(Q \in I\right) = \frac{1}{2}$.  Therefore, the total
  variation distance is bounded by
  \[
    \| \mathbb{P}_k - \pi \|_{\mathrm{tv}}
    \ge
    \frac{1}{2} - k \exp(-\Omega(n)).
  \]
  So, for convergence to less than $\varepsilon$, we must require at least
  $2^{O(n)}$ steps.
\end{proof}

\section{Additional System Details} 

\subsection{Decomposition with Inactive Variables}
\label{sec:tradeoff:multi}

Our system also uses the following heuristic.  The intuition behind
this heuristic is that, the fewer variables we materialize, the faster
all approaches are. DeepDive allows the user to specify a 
set of rules (the ``interest area'') that identify
a set of relations that she will focus on in the next iteration of the
development. Given the set of rules, we use the standard {\em
  dependency graph} to find relations that could be changed by the
rules.  We call variables in the relations that the developer wants to
improve {\em active variables}, and we call {\em inactive variables}
those that are not needed in the next KBC iteration. Our previous
discussion assumes that all the variables are active. The intuition is
that we only require the active variables, but we first need to
marginalize out the inactive variables to create a new factor
graph. If done naively, this new factor graph can be excessively
large. The observation that serves as the basis of our optimization is
as follows: conditioning on all active variables, we can partition all
inactive variables into sets that are conditionally independent from
each other. Then, by appropriately grouping together some of the sets,
we can create a more succinct factor graph to materialize.

\paragraph*{Procedure}

We now describe in detail our procedure to decompose a graph in the
presence of inactive variables, as illustrated in
Algorithm~\ref{alg:decomp}. Let $\{\mathcal{V}^{(i)}_j\}$ be the
collection of all the sets of inactive variables that are
conditionally independent for all other inactive variables,
\emph{given the active variables}. This is effectively a partitioning
of all the inactive variables (Line 1). For each set
$\mathcal{V}^{(i)}_j$, let $\mathcal{V}^{(a)}_j$ be the \emph{minimal}
set of active variables conditioning on which the variables in
$\mathcal{V}^{(i)}_j$ are independent of all other inactive variables
(Line 2).  Consider now the set $\mathcal{G}$ of pairs
$(\mathcal{V}^{(i)}_j,\mathcal{V}^{(a)}_j)$. It is easy to see that we
can materialize each pair separately from other pairs. Following this
approach to the letter, some active random variables may be
materialized multiple times, once for each pair they belong to, with
an impact on performance. This can be avoided by grouping some pairs
together, with the goal of minimizing the runtime of inference on the
materialized graph. Finding the optimal grouping is actually
\textsf{NP-hard}, even if we make the simplification that each group
$g=(\mathcal{V}^{(i)}, \mathcal{V}^{(a)})$ has a cost function $c(g)$
that specifies the inference time, and the total inference time is the
sum of the cost of all groups. The key observation concerning the
NP-hardness is that two arbitrary groups $g_1=(\mathcal{V}^{(i)}_1,
\mathcal{V}^{(a)}_1)$ and $g_2=(\mathcal{V}^{(i)}_2,
\mathcal{V}^{(a)}_2)$ can be materialized together to form a new group
$g_3=(\mathcal{V}^{(i)}_1 \cup \mathcal{V}^{(i)}_2,
\mathcal{V}^{(a)}_1 \cup \mathcal{V}^{(a)}_2)$, and $c(g_3)$ could be
smaller than $c(g_1)+c(g_2)$ when $g_1$ and $g_2$ share most of the
active variables.  This allows us to reduce the problem to
\textsc{WeightedSetCover}. Therefore, we use a greedy heuristic that
starts with groups, each containing one inactive variable, and
iteratively merges two groups
$(\mathcal{V}^{(i)}_1,\mathcal{V}^{(a)}_1)$ and
$(\mathcal{V}^{(i)}_2,\mathcal{V}^{(a)}_2)$ if $|\mathcal{V}^{(a)}_1
\cup
\mathcal{V}^{(a)}_2|=\max\{|\mathcal{V}^{(a)}_1|,|\mathcal{V}^{(a)}_2|\}$
(Line 4-6). The intuition behind this heuristic is that, according to
our study of the tradeoff between different materialization
approaches, the fewer variables we materialize, the greater the speedup
for all materialization approaches will be.

\begin{algorithm}[t]
\scriptsize
\begin{algorithmic}[1]
\Require Factor graph $FG = (\mathcal{V},\mathcal{F})$,
active variables $\mathcal{V}^{(a)}$,
and inactive variables $\mathcal{V}^{(i)}$.
\Ensure Set of groups of variables $\mathcal{G} = \{(\mathcal{V}^{(i)}_i, \mathcal{V}^{(a)}_i)\}$ that will be materialized independently with other groups.
\algrule

\State $\{\mathcal{V}^{(i)}_j\} \leftarrow$ 
connected components of the factor graph
after removing all active variables, i.e., 
$\mathcal{V} - \mathcal{V}^{(a)}$.

\State $\mathcal{V}^{(a)}_j \leftarrow$ minimal set of active
variables conditioned on which $\mathcal{V}^{(i)}_j$ is independent
of other inactive variables.

\State $\mathcal{G} \leftarrow \{(\mathcal{V}^{(i)}_j, \mathcal{V}^{(a)}_j)\}$.

\ForAll{$j \ne k$ s.t. $|\mathcal{V}^{(a)}_j \cup
\mathcal{V}^{(a)}_k|=\max\{|\mathcal{V}^{(a)}_j|,|\mathcal{V}^{(a)}_k|\}$}
\State $\mathcal{G} \leftarrow
\mathcal{G} \cup \{(\mathcal{V}^{(i)}_j \cup \mathcal{V}^{(i)}_k, \mathcal{V}^{(a)}_j \cup \mathcal{V}^{(a)}_k)\}$
\State $\mathcal{G} \leftarrow 
\mathcal{G} - \{(\mathcal{V}^{(i)}_j, \mathcal{V}^{(a)}_j), (\mathcal{V}^{(i)}_k, \mathcal{V}^{(a)}_k)\}$
\EndFor

\end{algorithmic}
\caption{Heuristic for Decomposition with Inactive Variables}
\label{alg:decomp}
\end{algorithm}

\begin{figure}[t]
\centering
\includegraphics[width=0.3\textwidth]{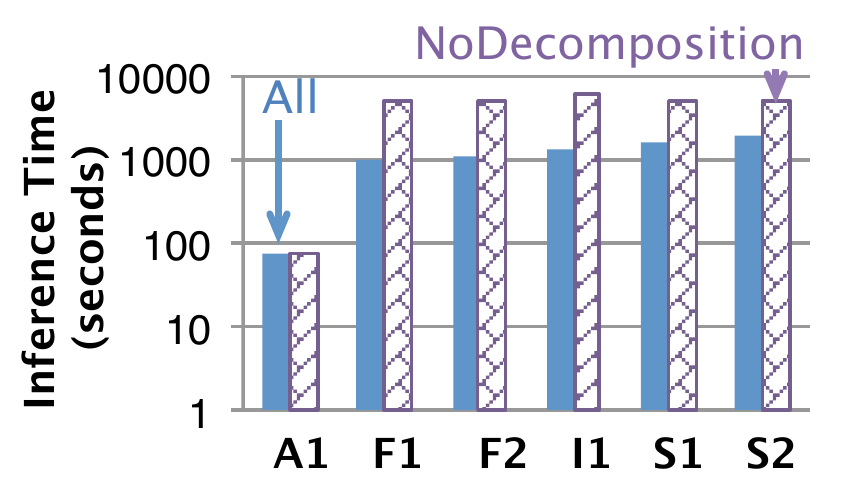}
\caption{The Lesion Study of Decomposition.}
\label{fig:lesion_decomp}
\end{figure}

\paragraph*{Experimental Validation}
We also verified that using the decomposition technique has a positive
impact on the performance of the system. We built
\textsc{NoDecomposition} that does not decompose the factor graph and
select either the sampling approach or the variational approach for
the whole factor graph. We report the {\em best} of these two
strategies as the performance of \textsc{NoDecomposition} in
Figure~\ref{fig:lesion_decomp}.  We found that, for A1, removing
structural decomposition does not slow down compared with \systemx,
and it is actually 2\% faster because it involves less computation for
determining sample acceptance. However, for both feature extraction
and supervision rules, \textsc{NoDecomposition} is at least $6\times$
slower. For both types of rules, the \textsc{NoDecomposition} costs
corresponds to the cost of the variational approach. In contrast, the
sampling approach has near-zero acceptance rate because of a large
change in the distribution.

\begin{figure}[t]
\centering
\begin{tabular}{rrrrrr}
\hline
& {\bf Adv.} & {\bf News} & {\bf Gen.} & {\bf Pharma.} & {\bf Paleo.} \\
\hline
{\bf \# Samples} & 2083 & 3007 & 22162 & 1852 & 2375 \\
\hline
\end{tabular}
\caption{Number of Samples We can Materialize in 8 Hours.}
\label{fig:app:mat}
\end{figure}

\subsection{Materialization Time} \label{sec:app:mat}

We show how materialization time changes across different systems. In
the following experiment, we set the total materialization time to 8
hours, a common time budget that our users often used for overnight
runs. We execute the {\em whole} materialization phase of \systemx for
all systems, which will materialize both the sampling approach and the
variational approach, and we require the whole process to finish in 8
hours.  Figure~\ref{fig:app:mat} shows how many samples we can collect
given this 8-hour budget: for all five systems, in 8 hours, we can
store more than 2000 samples, from which to generate 1000 samples
during inference if the acceptance rate for the sampling approach is
0.5.

\subsection{Incremental Learning} \label{sec:inclearn}

\begin{figure}[t]
\centering
\includegraphics[width=0.3\textwidth]{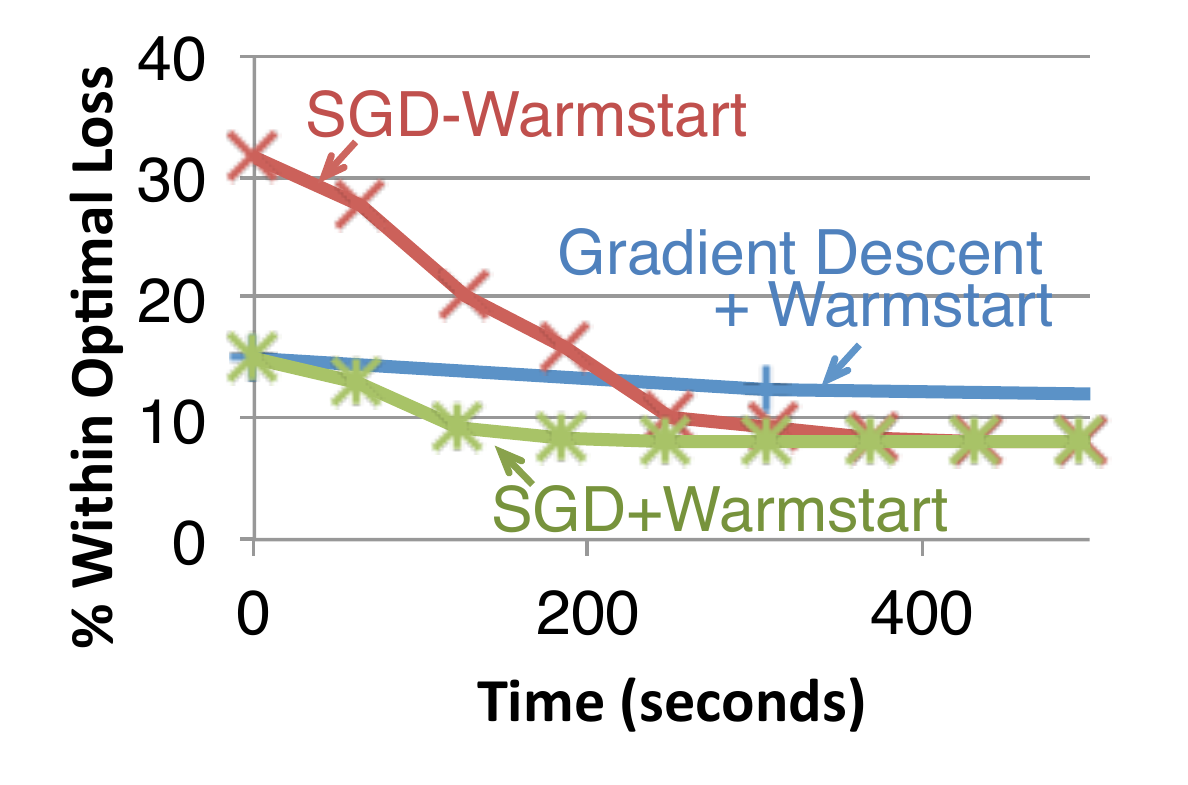}
\caption{Convergence of Different Incremental Learning Strategies.}
\label{fig:app:learn}
\end{figure}

Rerunning learning happens when one labels new documents (typically an
analyst in the loop), so one would like this to be an efficient
procedure. One interesting observation is that a popular technique
called distant supervision (used in DeepDive) is able to label new
documents heuristically using rules, so that new labeled data is
created without human intervention. Thus, we study how to
incrementally perform learning. This is well studied in the online
learning community, and here we adapt essentially standard
techniques. We essentially get this for free because we choose to
adapt standard online methods, such as stochastic gradient descent
with warmstart, for our training system. Here, warmstart means that
\systemx uses the learned model in the last run as the starting point
instead of initializing the model to start randomly.

\paragraph*{Experimental Validation}

We validate that, by adapting standard online learning methods,
\systemx is able to outperform baseline approaches. We compare
\systemx with two baseline approaches, namely (1) stochastic gradient
descent without warmstart and (2) gradient descent with warmstart. We
run on the dataset News with rules F2 and S2, introduces {\em both}
new features and new training examples with labels. We obtain a proxy
for the optimal loss by running both stochastic gradient descent and
gradient descent separately for 24 hours and picking the lowest
loss. For each learning approach, we grid search its step size in
\{1.0, 0.1, 0.01, 0.001, 0.0001\}, run for 1 hour, and pick the
fastest one to reach a loss that is within 10\% of the optimal
loss. We estimate the loss after each learning epoch and report the
percentage within the optimal loss in Figure~\ref{fig:app:learn}.

As shown in Figure~\ref{fig:app:learn}, the loss of all learning
approaches decrease, with more learning epochs. However, \systemx's
SGD+Warmstart approach reaches a loss that achieves a loss within 10\%
optimal loss faster than other approaches.  Compared with
SGD-Warmstart, SGD+Warmstart is 2$\times$ faster in reaching a loss
that is within 10\% of the optimal loss because it starts with a model
that has lower loss because of warmstart. Compared with gradient
descent, SGD+Warmstart is about 10$\times$ faster. Although
SGD+Warmstart and gradient descent with warmstart have the same initial
loss, SGD converges faster than gradient descent in our example.

\subsection{Discussion of Concept Drift}

\begin{figure}[t]
\centering
\includegraphics[width=0.3\textwidth]{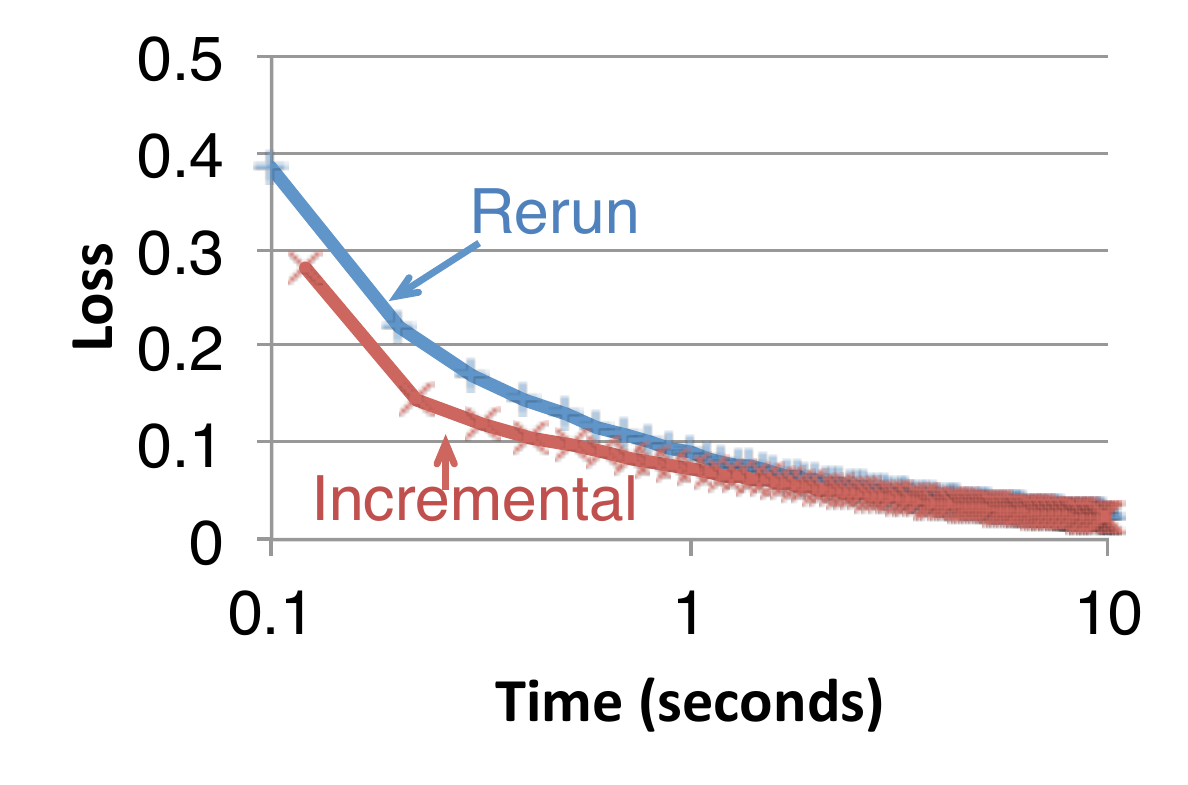}
\caption{Impact of Concept Drifts.}
\label{fig:app:learn_concept_drift}
\end{figure}

Concept drift~\citeapp{Gama:2014:ACMCS,Katakis:2009:JIIS}
  refers to the scenario that arises in online learning when the
  training examples become available as a stream, and the distribution
  from which these examples are generated keeps changing
  overtime. Therefore, the key technique in how to resolve the problem
  of concept drift is premarily in how to design a machine learning
  system to automatically {\em detect} and {\em adapt to} the
  changes. The design needs to consider how to not only update the
  model incrementally, but also ``forget the old
  information''~\citeapp{Gama:2014:ACMCS}.  Although we have not
  observed quality issues caused by concept drift in our KBC
  applications, it is possible for DeepDive to encounter this issue in
  other applications. Thus, we study the impact of concept drift to
  our incremental inference and learning approach. 

\paragraph*{Impact on Performance}

In this work, we solely focus on the efficiency of updating the model
incrementally. When concept drift happens, the new distribution and
new target model would be significantly different from the
materialized distribution as well as the learned model from last
iteration.  From our analysis in Section~\ref{sec:tradeoff} and
Section~\ref{sec:inclearn}, we hypothesize there are two impacts of
concept drift that require consideration in DeepDive. First, the
difference between materialized distribution and the target
distributed is modeled as the amount of change in
Section~\ref{sec:tradeoff}, and therefore, in the case of concept
drift, we expect our system favors variational approach more than
sampling approach. Second, because the difference between the learned
model and target model is large, it is not clear whether the warmstart
scheme we discussed in Section~\ref{sec:inclearn} will work.  

\paragraph*{Experimental Validation}
As our workloads do not have concept drift, we follow prior
art~\citeapp{Katakis:2009:JIIS}, we use its dataset that contains
9,324 emails ordered chronologically, and predict exactly the same
task that for each email predict whether it is spam or not.  We
implement a logistic regression classifier in DeepDive with a rule
similar to Example~\ref{exp:lr}. We use the first 10\% emails and
first 30\% emails as training sets, respectively, and use the
remaining 70\% as the testing set. We construct two systems in a
similar way as Section~\ref{sec:exp}: \textsc{Rerun} that trains
directly on 30\% emails, and \textsc{Incremental} that materialize
using 10\% emails and incrementally train on 30\% emails. We run both
systems and measure the test set loss after each learning epoch.  

Figure~\ref{fig:app:learn_concept_drift} shows the result. We see that
even with concept drift, both systems converge to the
same loss. \textsc{Incremental} converges faster than
\textsc{Rerun}, and it is because even with
concept drift, warmstart still allows \textsc{Incremental}
to start from a lower loss at the first iteration.
In terms of the time used for each iteration, 
\textsc{Incremental} and \textsc{Rerun} uses roughly
the same amount of time for each iteration. This is 
expected because \textsc{Incremental} rejects almost
all samples due to the large difference between distributions
and switch to variational approach, which, for a logistic
regression model where all variables are active and all
weights have been changed, is very similar to the original model.
From these result we see that, even with concept drifts,
\textsc{Incremental} still provides benefit over
rerunning the system from scratch due to warmstart; however, 
as expected, the benefit from incremental
inference is smaller.

\section{Additional Related Work}
We provide additional related work, especially those work that impacts
the current design of DeepDive's language component.

\paragraph*{More Related KBC Systems}

DeepDive's model of KBC is motivated by the recent attempts
of using machine learning-based technique for KBC~\citeapp{Yates:2007:NAACL,Banko:2007:IJCAI,Barbosa:ICDE:2013,Kasneci:2009:SIGMODR,Suchanek:2009:WWW,Poon:2007:AAAI,Zhu:2009:WWW}
and the line of research that aims to improve the quality
of a specific component of KBC system~\citeapp{
mintz2009distant,riedel2010modeling,yao2010collective,min2013distant,zhangtowards,zhang2012big,fan2014distant,krause2012large,nguyen2011end,surdeanu2011stanford,purver2012experimenting,marchetti2012learning,surdeanu2010simple,hoffmann2010learning,bunescu2007learning,craven1999constructing,
li2014weakly}. When designing DeepDive, we used these
systems as test cases to justify
the generality of our framework. In fact, we find that
DeepDive is able to model 13 of these popular KBC 
systems~\citeapp{mintz2009distant,yao2010collective,riedel2010modeling,zhangtowards,fan2014distant,min2013distant,krause2012large,nguyen2011end,purver2012experimenting,marchetti2012learning,zhang2012big,li2014weakly,blunsom2014convolutional}.

\paragraph*{Other Incremental Algorithms} We build on decades of study of
incremental view
maintenance~\citeapp{Gupta:1999:Book,Chirkova:2012:Book}: we rely on the
classic incremental maintenance techniques to handle relational
operations. Recently, others have proposed different approaches for
incremental maintenance for individual analytics workflows like
iterative linear algebra problems~\citeapp{Nikolic:2014:SIGMOD},
classification as new training data arrived~\citeapp{Koc:2011:PVLDB}, or
segmentation as new data arrives~\citeapp{Chen:2008:ICDE}.

Factor graphs have been used by some probabilistic databases as the
underlying representation~\citeapp{Sen:2009:VLDBJ,Wick:2010:PVLDB}, but
they did not study how to reuse computation across different, but
similar, factor graphs.

{\scriptsize
\bibliographystyleapp{abbrv}
\bibliographyapp{inc}}

\end{document}